\DeclarePairedDelimiter\floor{\lfloor}{\rfloor}
\newtheorem{theorem}{Theorem}
\newenvironment{proof}[1][Proof]{\noindent \textbf{#1.} }{\  \rule{0.5em}{0.5em}}
\title{\vspace{-4cm}Continuous-Time Mean--Variance Portfolio Selection: A  Reinforcement Learning Framework\thanks{We are grateful for comments from the seminar participants at the Fields Institute.
Wang gratefully acknowledges financial supports through the FDT Center for Intelligent Asset Management at Columbia. Zhou gratefully acknowledges financial supports through a start-up grant at Columbia University and through the FDT Center for Intelligent Asset Management.}}
\author{Haoran Wang\footnote{Department of Industrial Engineering and Operations Research, Columbia University, New York, NY 10027, USA. Email: hw2718@columbia.edu.}\and Xun Yu Zhou\footnote{Department of Industrial Engineering and Operations Research, and The Data Science Institute, Columbia University, New York, NY 10027, USA. Email: xz2574@columbia.edu.}}
\date{\vspace{-5ex}}
\begin{document}
\maketitle
\begin{center}

First draft: February 2019\\
This version: May 2019
\end{center}

\bigskip
\bigskip

\begin{abstract}

We approach the continuous-time mean--variance (MV) portfolio selection with reinforcement learning (RL). The problem is to achieve the best tradeoff between exploration and exploitation, and is formulated as an entropy-regularized, relaxed stochastic control problem. We prove that the optimal feedback policy for this problem must be Gaussian, with time-decaying variance. We then establish connections between the entropy-regularized MV and the classical MV, including the solvability equivalence and the convergence as exploration weighting parameter decays to zero. Finally, we prove a policy improvement theorem, based on which we devise an implementable RL algorithm. We find that our algorithm  outperforms both an adaptive control based method and a deep neural networks based algorithm by a large margin in our simulations.


 \bigskip
 {\bf Key words.} Reinforcement learning, mean-variance portfolio selection,
 entropy regularization,  stochastic control, value function, Gaussian distribution, policy improvement theorem.

\end{abstract}
\newpage

\section{Introduction}

Applications of reinforcement learning (RL) to quantitative finance (e.g., algorithmic and high frequency trading, smart order routing, portfolio management, etc) have attracted  more attentions in recent years. One of the main reasons is that the electronic markets prevailing nowadays can provide sufficient amount of microstructure data for training and adaptive learning, much beyond what human traders and portfolio managers could handle in old days. Numerous studies  have been carried out along  this direction. For example, \cite{NFK} conducted the first large scale empirical analysis of RL method applied to optimal order execution and achieved substantial improvement relative to the baseline strategies. \cite{HW} improved over the theoretical optimal trading strategies of the Almgren-Chriss model (\cite{AC}) using RL techniques and market attributes. \cite{MS} and \cite{MS2} studied portfolio allocation problems with transaction costs via direct policy search based RL methods, without resorting to forecast models that rely on supervised leaning.

However, most existing works only focus on RL optimization problems with expected utility of discounted rewards. Such criteria are either unable to fully characterize the uncertainty of the decision making process in financial  markets or opaque to  typical investors. On the other hand,
mean--variance (MV)  is one of the most important criteria for portfolio choice. Initiated in the seminal work \cite{M} for portfolio selection  in a single period, such a criterion yields an asset allocation strategy that minimizes the variance of the final payoff while targeting some prespecified mean return. The MV problem has been further investigated in the discrete-time multiperiod setting  (\cite{LN}) and the continuous-time setting (\cite{MV_zhou}), along with hedging  (\cite{DR}) and optimal liquidation  (\cite{AC}), among many other variants and generalizations. The popularity of the MV criterion is not only due to its intuitive and transparent nature  in capturing the tradeoff between  risk and reward for practitioners, but also due to the theoretically interesting  issue of time-inconsistency (or Bellman's inconsistency) inherent with the underlying stochastic optimization and control problems.

From the RL perspective, it is computationally challenging to  seek the global optimum for Markov Decision Process (MDP) problems under the MV criterion (\cite{MT}).
In fact, variance estimation and control are not as direct as optimizing the expected reward-to-go which has been well understood in the classical MDP framework for most RL problems. Because most standard MDP performance criteria  are linear in expectation, including the discounted sum of rewards and the long-run average reward (\cite{SB}), Bellman's consistency equation can be easily derived for guiding policy evaluation and control, leading  to many state-of-the-art RL techniques (e.g., Q-learning, temproaral difference (TD) learning, etc). The variance of reward-to-go, however, is nonlinear in expectation and, as a result, most of the well-known learning rules cannot be applied directly.

Existing works on variance estimation and control generally divide into two groups, value based methods and policy based methods.  \cite{Sobel} obtained the Bellman's equation for the variance of reward-to-go under a {\it fixed}, given policy. Based on that equation,  \cite{SKK} derived the TD(0) learning rule to estimate the variance under any given policy. In a related paper,  \cite{SK} applied  this value based method  to an MV  portfolio selection problem. It is worth noting that due to their definition of the intermediate value function (i.e., the variance penalized expected reward-to-go),  Bellman's optimality principle does not hold. As a result, it is not guaranteed that a greedy policy based on the latest updated value function will eventually lead to the true global optimal policy. The second approach, the policy based RL,  was proposed  in \cite{TDM}. They also extended the work to linear function approximators and devised actor-critic algorithms for  MV optimization problems for which convergence to the local optimum is guaranteed with probability one (\cite{TM}). Related works following this line of research include \cite{MV_2,MV_1}, among others. Despite the various methods mentioned above, it remains  an open and interesting question in RL to search for the {\it global} optimum under the MV criterion.

In this paper, we establish an RL framework for studying  the continuous-time MV portfolio selection, with
continuous portfolio (control/action) and wealth (state/feature) spaces. The continuous-time formulation is  appealing when the rebalancing of portfolios can take place  at ultra-high frequency. Such a formulation may also  benefit from the large amount of tick data that is available  in most electronic markets nowadays. The classical continuous-time MV portfolio
model is a spacial instance of a stochastic linear--quadratic (LQ) control problem
(\cite{MV_zhou}). Recently, \cite{Hwang} proposed and developed a general entropy-regularized,  relaxed stochastic control formulation, called an {\it exploratory} formulation,  to capture explicitly
the tradeoff between exploration and exploitation in RL. They showed that the optimal distributions of the exploratory control policies must be Gaussian for an LQ control problem in the infinite time horizon, thereby providing an interpretation for the Guassian exploration broadly used both in RL algorithm design and in practice.

While being essentially an  LQ control problem, the MV portfolio selection must be formulated in a {\it finite} time horizon which is not covered by \cite{Hwang}.
The first contribution of this paper is to present the global optimal solution to the exploratory MV problem. One of the interesting findings is that, unlike its infinite horizon counterpart derived in \cite{Hwang}, the optimal feedback control policy for the finite horizon case is a Gaussian distribution with a {\it time-decaying} variance.
This suggests that the level of exploration decreases as the time approaches the
end of the planning horizon.
On the other hand, we will obtain results and observe insights that are parallel  to those in \cite{Hwang}, such as the perfect separation between exploitation and exploration in the mean and variance of the optimal Gaussian distribution, the positive effect of a random environment on learning, and the  close connections between the classical and the exploratory MV problems.

The main contribution of the paper, however, is to
design an interpretable and implementable RL algorithm to learn the global optimal solution of the exploratory MV problem, premised upon a provable {\it policy improvement theorem} for continuous-time stochastic control problems with both entropy regularization and control relaxation. This theorem provides an explicit updating scheme for the feedback Gaussian policy, based on the value function of the current policy in an iterative manner. Moreover, it enables us to
reduce  from a family of general non-parametric policies to a specifically parametrized Gaussian family for exploration and exploitation, irrespective of the choice of an initial policy. This, together with a carefully chosen initial Gaussian policy at the beginning of the learning process, guarantees the fast convergence of both the policy and the value function to the global optimum of the exploratory MV problem.

We further compare our RL algorithm with two other methods applied to the MV portfolio optimization. The first one is an adaptive control approach that adopts the real-time maximum likelihood estimation of the underlying model parameters.
The other one is a recently developed continuous control RL algorithm, a deep deterministic policy gradient method (\cite{Lillicrap}) that employs deep neural networks. The comparisons are performed under various simulated market scenarios, including those with both stationary and non-stationary investment opportunities. In nearly all the simulations, our RL algorithm outperforms the other two methods by a large margin, in terms of both performance and training time.

The rest of the paper is organized as follows. In Section 2, we present the continuous-time exploratory MV problem under the entropy-regularized relaxed stochastic control framework. Section 3 provides the complete solution of the exploratory MV problem, along with  connections to its classical counterpart. We then provide the policy improvement theorem and a convergence result for the learning problem in Section 4, based on which we devise the RL algorithm for solving the exploratory MV problem. In Section 5, we compare our algorithm with two other methods in  simulations under various market scenarios. Finally, we conclude in Section 6.

\section{Formulation of Problem}

In this section, we formulate an exploratory, entropy-regularized Markowitz's MV portfolio selection problem in continuous time, in the context of RL. The motivation of a general exploratory stochastic control formulation, of which the MV problem is a special case,  was discussed at great length  in a previous paper  \cite{Hwang}; so we will frequently refer to that paper.

\subsection{Classical continuous-time MV problem}

We first recall the classical MV problem in continuous time (without RL).
For ease of presentation, throughout this paper we consider an investment universe consisting of only one risky asset and one riskless asset. The case of multiple risky assets
poses no essential differences or difficulties other than notational complexity.

Let an investment  planning horizon $T>0$ be fixed, and $\{W_t,0\leq t\leq T\}$ a standard one-dimensional Brownian motion defined on a filtered probability space $(\Omega, \mathcal{F},\{\mathcal{F}_{t}\}_{0\leq t\leq T},\mathbb{P})$ that satisfies the usual conditions. The price process of the risky asset is a geometric Brownian motion governed by
\begin{equation}\label{price}
dS_t=S_t\left(\mu\, dt+\sigma \,dW_t\right), \quad 0\leq t\leq T,
\end{equation}
with $S_0=s_0>0$ being the initial price at $t=0$, and $\mu\in \mathbb{R}$, $\sigma>0$ being the mean and volatility parameters, respectively. The riskless asset
has  a constant interest rate $r>0$. \footnote{In practice, the true  (yet unknown) investment opportunity parameters $\mu$, $\sigma$ and $r$ can be time-varying stochastic processes. Most existing quantitative finance methods are devoted to
estimating these parameters. In contrast, RL learns the values of various strategies and the optimal value through exploration and exploitation, {\it without} assuming any statistical properties of these parameters or estimating them.
But for a model-based, classical  MV problem, we assume these parameters are constant and known. In subsequent contexts, all we need is the structure of the problem for our RL algorithm design.}
%
%
%
The Sharpe ratio of the risky asset is defined by $\rho=\frac{\mu-r}{\sigma}$.

Denote by $\{x^u_t,0\leq t\leq T\}$ the {\it discounted} wealth process of an agent who rebalances her portfolio investing in  the risky and riskless assets with a strategy $u=\{u_t, 0\leq t\leq T\}$. Here $u_t$ is the discounted dollar value put  in the risky asset at time $t$, while satisfying the standard self-financing assumption and other technical conditions that will be spelled  out  in details below. It follows  from (\ref{price}) that the wealth process satisfies
\begin{equation}\label{classical_wealth}
dx^u_t=\sigma u_t(\rho\, dt+\,dW_t), \quad 0\leq t\leq T,
\end{equation}
with an initial endowment being $x^u_0=x_0\in \mathbb{R}$.

The classical continuous-time MV model aims to solve the following constrained optimization problem
\begin{eqnarray}\nonumber
& &\min_{u} \text{Var}[x^u_T]\\
& &\text{subject to} \ \mathbb{E}[x^u_T]=z,\label{target}
\end{eqnarray}
where $\{x^u_t,0\leq t\leq T\}$ satisfies the dynamics (\ref{classical_wealth}) under the investment strategy (portfolio) $u$, and $z\in \mathbb{R}$ is an investment target set at $t=0$ as  the desired mean payoff at the end of the investment horizon $[0,T]$.\footnote{The original  MV problem is to find the Pareto efficient frontier for a two-objective (i.e. maximizing the expected terminal payoff and minimizing its variance) optimization problem. There are a number of equivalent mathematical formulations to find such a frontier, (\ref{target}) being one of them. In particular, by varying the parameter $z$ one can trace out the frontier. See \cite{MV_zhou} for details.}

Due to the variance in its objective, (\ref{target}) is known to be {\it time inconsistent}. The problem then becomes descriptive rather than normative because there is generally no {\it dynamically} optimal solution for a time-inconsistent optimization problem. Agents react differently to the same time-inconsistent problem, and a goal of the study becomes to {\it describe} the different behaviors when facing
such time-inconsistency.
In this paper we focus ourselves to the so-called \textit{pre-committed} strategies of the MV problem, which are optimal at $t=0$ only.\footnote{For a detailed discussions about the different behaviors
under time-inconsistency, see the seminal paper \cite{strotz}. Most of the study
on continuous-time MV problem in literature has been devoted to pre-committed strategies; see \cite{MV_zhou, MV_no_shorting, MV_no_bankruptcy, MV_random, MV_regime_switching}.}

To solve (\ref{target}), one first transforms it into an unconstrained problem by applying a Lagrange multiplier $w$:\footnote{Strictly speaking, $2w\in \mathbb{R}$ is the Lagrange multiplier.}
\begin{equation}\label{unconstrained_classical}
\min_u \mathbb{E}[(x^u_T)^2]-z^2-2w\left(\mathbb{E}[x^u_T]-z\right)=\min_u \mathbb{E}[(x^u_T-w)^2]-(w-z)^2.
\end{equation}
This problem can be solved analytically, whose solution $u^*=\{u^*_t,0\leq t\leq T\}$ depends on $w$.
Then the original constraint $\mathbb{E}[x^{u^*}_T]=z$ determines the value of
$w$.  We refer a detailed derivation to \cite{MV_zhou}.

\subsection{Exploratory continuous-time MV problem}
Given the complete knowledge of the model parameters, the classical, model-based
 MV problem (\ref{target}) and many of its variants have been solved rather
 completely. When implementing these solutions, one needs to estimate
 the market parameters from historical time series of asset prices, a procedure known as {\it identification} in classical adaptive control. However, it is well known that in practice  it is difficult to estimate the investment opportunity parameters, especially the mean return (aka the {\it mean--blur problem}; see, e.g.,  \cite{Luenberger}) with a workable  accuracy. Moreover, the classical optimal MV strategies are often extremely sensitive to these parameters,  largely due to the procedure of inverting ill-conditioned variance--covariance matrices  to obtain optimal allocation weights. In view of these two issues, the Markowitz solution can be greatly irrelevant to the underlying investment objective.

On the other hand, RL techniques do not require, and indeed often skip, any estimation of model parameters. Rather, RL algorithms, driven by historical data,  output optimal (or near-optimal) allocations directly. This is made possible by
direct interactions with the unknown investment environment, in a learning (exploring) while optimizing (exploiting) fashion.
\cite{Hwang} motivated and proposed a general theoretical framework for exploratory, RL stochastic control problems and carried out a detailed study for the special LQ case, albeit in the setting of the infinite time horizon.
We adopt the same framework here, noting the inherent features of an LQ structure and a {\it finite} time horizon of the MV problem. Indeed, although the motivation for  the exploratory formulation is mostly the same, there are intriguing new insights emerging with this transition from the infinite time horizon to its finite counterpart.

First, we introduce the ``exploratory" version of the state dynamics (\ref{classical_wealth}). It was  originally proposed  in \cite{Hwang}, motivated by
repetitive learning in RL. In this formulation, the control (portolio) process $u=\{u_t, 0\leq t\leq T\}$ is randomized, which represents exploration and learning, leading to
a measure-valued or distributional control process whose density function is given by $\pi=\{\pi_t,0\leq t\leq T\}$. The dynamics (\ref{classical_wealth}) is changed to
\begin{eqnarray}
dX^{\pi}_t &=&\tilde{b}(\pi_t)\,dt+\tilde{\sigma}(\pi_t)\,dW_t,
 \label{state_process0}
\end{eqnarray}
where $0<t\leq T$ and $X^{\pi}_0=x_0$,
\begin{equation}\label{drift}
\tilde{b}(\pi):=\int_{\mathbb{R}} \rho \sigma u\pi(u)du,\;\;\pi\in \mathcal{P}\left( \mathbb{R}\right),
\end{equation}
and
\begin{equation}\label{volatility}
\tilde{\sigma}(\pi):=\sqrt{\int_{\mathbb{R}} \sigma^2 u^2\pi(u)du},\;\;\pi\in \mathcal{P}\left( \mathbb{R}\right),
\end{equation}
with $\mathcal{P}%
\left( \mathbb{R}\right) $ being the set of density functions of probability measures on $\mathbb{R}$ that are absolutely
continuous with respect to the Lebesgue measure. Mathematically,
(\ref{state_process0}) coincides with the {\it relaxed control} formulation in classical control theory.
Refer to \cite{Hwang} for a  detailed discussion on the motivation of (\ref{state_process0}).

Denote respectively by $\mu_t$ and $\sigma^2_t$, $0\leq t\leq T$, the mean and variance (assuming they exist for now) processes associated with the distributional control process $\pi$, i.e.,
\begin{equation}\label{mean-variance_process}
\mu _{t}:=\int_{\mathbb{R}}u\pi _{t}(u)du\quad \  \  \text{and}\quad \ \ \sigma
_{t}^{2}:=\int_{\mathbb{R}}u^{2}\pi _{t}(u)du-\mu _{t}^{2}\text{ }.
\end{equation}%
Then, it follows immediately that the exploratory dynamics (\ref{state_process0}) become
\begin{eqnarray}
dX^{\pi}_t
&=&\rho \sigma \mu_t\, dt+\sigma \sqrt{\mu_t^2+\sigma_t^2}\, dW_t, \label{state_process}
\end{eqnarray}
where $0<t\leq T$ and $X^{\pi}_0=x_0$.
The randomized, distributional  control process $\pi=\{\pi_t, 0\leq t\leq T\}$ is to model
exploration, whose overall level is in turn captured by
its accumulative differential entropy
\begin{equation}\label{entropy}
\mathcal{H}(\pi):=-\int_0^T\int_{\mathbb{R}}\pi_t(u)\ln \pi_t(u)dudt.
\end{equation}
Further, introduce a {\it temperature parameter} (or {\it exploration weight}) $\lambda>0$ reflecting  the tradeoff between exploitation and exploration.
The entropy-regularized, exploratory  MV problem is then to solve, for any fixed $w\in \mathbb{R}$:
\begin{equation}\label{value_function}
\min_{\pi\in \mathcal{A}(x_0,0)}\mathbb{E}\left[(X_T^{\pi}-w)^2+\lambda \int_0^T\int_{\mathbb{R}}\pi_t(u)\ln \pi_t(u)dudt\right]-(w-z)^2,
\end{equation}
where $\mathcal{A}(x_0,0)$ is the set of admissible distributional controls on $[0,T]$ to be precisely defined below. Once this problem is solved with a minimizer
$\pi^*=\{\pi^*_t, 0\leq t\leq T\}$, the Lagrange multiplier $w$ can  be determined by the additional constraint $\mathbb{E}[X^{\pi^*}_T]=z$.

The optimization objective (\ref{value_function}) explicitly encourages exploration, in contrast to the classical problem (\ref{unconstrained_classical}) which concerns exploitation only.

We will solve (\ref{value_function}) by dynamic programming. For that we need to define the value functions. For each $(s,y)\in[0,T)\times \mathbb{R}$, consider the
state equation (\ref{state_process}) on $[s,T]$ with $X^{\pi}_s=y$.
Define the set of admissible controls, $\mathcal{A}(s,y)$, as follows. Let $\mathcal{B}(\mathbb{R})$ be the Borel algebra on $\mathbb{R}$. A (distributional) control (or portfolio/strategy) process $\pi=\{\pi_t,s\leq t\leq T\}$ belongs to $\mathcal{A}(s,y)$, if

\smallskip

(i)		for each $s\leq t\leq T$, $\pi _{t}\in \mathcal{P}(U)$ a.s.;

(ii)		for each $A\in \mathcal{B}(\mathbb{R})$, $\{\int_A\pi _{t}(u)du,s\leq t\leq T\} $
is $\mathcal{F}_{t}$-progressively measurable;

(iii)	$\mathbb{E}\left[
\int_{s}^{T}\left( \mu _{t}^{2}+\sigma _{t}^{2}\right) dt\right] <\infty$;

(iv)	$\mathbb{E}\left[\big|(X_T^{\pi}-w)^2+\lambda \int_s^T\int_{\mathbb{R}}\pi_t(u)\ln \pi_t(u)dudt\big|\; \Big | X_s^{\pi}=y\right]<\infty$.

\medskip

Clearly, it follows from condition (iii) that
the stochastic differential equation (SDE) (\ref{state_process}) has a unique strong solution for $s\leq t\leq T$ that satisfies $X^{\pi}_s=y$.

\medskip

Controls in $\mathcal{A}(s,y)$ are measure-valued (or, precisely, density-function-valued) stochastic {\it processes}, which are also called {\it open-loop} controls in the control terminology.
As in the classical control theory, it is important to distinguish between open-loop controls and {\it feedback} (or {\it closed-loop}) controls (or {\it policies} as in the RL literature, or {\it laws} as in the control literature). Specifically, a {\it deterministic} mapping $\boldsymbol{\pi}(\cdot;\cdot,\cdot)$ is called an (admissible)  feedback control  if i) $\boldsymbol{\pi}(\cdot;t,x)$ is a density function for each $(t,x)\in[0,T]\times \mathbb{R}$; ii) for each $(s,y)\in[0,T)\times \mathbb{R}$, the following SDE (which is the system dynamics after the feedback policy $\boldsymbol{\pi}(\cdot;\cdot,\cdot)$ is applied)
\begin{equation}\label{new_dynamics_feedback}
dX^{\boldsymbol{\pi}}_t=\tilde{b}( \boldsymbol{\pi}(\cdot;t,X^{\boldsymbol{\pi}}_t))dt+\tilde{\sigma}(\boldsymbol{\pi}(\cdot;t,X^{\boldsymbol{\pi}}_t))dW_t,\; t\in[s,T]; \ X^{\boldsymbol{\pi}}_{s}=y
\end{equation}
has a unique strong solution $\{X^{\boldsymbol{\pi}}_t,t\in[s,T]\}$,  and  the open-loop control
$\pi=\{\pi
_{t},$ $t\in[s,T]\}\in \mathcal{A}(s,y)$ where $\pi_{t}:=\boldsymbol{\pi}(\cdot;t,X_t^{\boldsymbol{\pi}})$. In this case, the open-loop control $\pi$ is said to be
{\it generated} from the feedback policy $\boldsymbol{\pi}(\cdot;\cdot,\cdot)$ {\it with respect to} the initial time and state,  $(s,y)$. It is useful to note that an open-loop control and its admissibility depend on the initial $(s,y)$, whereas a
feedback policy can generate open-loop controls for {\it any} $(s,y)\in[0,T)\times \mathbb{R}$, and hence is in itself independent of $(s,y)$.\footnote{Throughout this paper, we use boldfaced $\boldsymbol{\pi}$ to denote feedback controls, and the normal style $\pi$ to denote open-loop controls.}

Now, for a fixed $w\in \mathbb{R}$, define
\begin{equation}\label{value_function_general}
V(s,y;w):=\inf_{\pi\in \mathcal{A}(s,y)}\mathbb{E}\left[(X_T^{\pi}-w)^2+\lambda \int_0^T\int_{\mathbb{R}}\pi_t(u)\ln \pi_t(u)dudt\Big | X_s^{\pi}=y\right]-(w-z)^2,
\end{equation}
for $(s,y)\in[0,T)\times \mathbb{R}$.
The function $V(\cdot,\cdot;w)$ is called the {\it optimal  value function} of the problem.\footnote{In the control literature, $V$ is called the value function.
However, in the RL literature the term ``value function" is also used for the objective value under a particular control. So to avoid ambiguity we call $V$ the {\it optimal} value function.}
Moreover, we define  the {\it value function}  under any given {\it feedback} control  $\boldsymbol{\pi}$:

\begin{equation}\label{general_value}
V^{\boldsymbol{\pi}}(s,y ;w)=\mathbb{E}\left[(X_T^{\boldsymbol{\pi}}-w)^2+\lambda \int_s^T\int_{\mathbb{R}}\pi_t(u)\ln \pi_t(u)dudt\Big | X_s^{\boldsymbol{\pi}}=y\right]-(w-z)^2,
\end{equation}
for $(s,y)\in[0,T)\times \mathbb{R}$, where $\pi=\{\pi
_{t},$ $t\in[s,T]\}$ is  the open-loop control generated from $\boldsymbol{\pi}$  with respect to  $(s,y)$ and $\{X^{\boldsymbol{\pi}}_t,t\in[s,T]\}$ is the corresponding wealth process.

%
%
%
%
%
%
%

\section{Solving Exploratory MV Problem}

In this section we first solve the exploratory MV problem, and then establish solvability equivalence  between the classical and exploratory problems. The latter is important for understanding the cost of exploration and for devising
RL algorithms.

\subsection{Optimality of Gaussian exploration}

To solve the exploratory MV problem (\ref{value_function}), we apply the classical Bellman's principle of optimality:
$$V(t,x;w)=\inf_{\pi\in \mathcal{A}(t,x)}\mathbb{E}\left[V(s,X^{\pi}_s;w)+\lambda\int_t^s\int_{\mathbb{R}}\pi_v(u)\ln \pi_v(u)dudv\Big | X^{\pi}_t=x\right],$$
for $x\in \mathbb{R}$ and $0\leq t< s\leq T$. Following standard arguments, we deduce that $V$ satisfies the Hamilton-Jacobi-Bellman (HJB) equation
\begin{equation}\label{HJB_1}
v_t(t,x;w)+\min_{\pi\in \mathcal{P}(\mathbb{R})}\Big(\frac{1}{2}\tilde{\sigma}^2(\pi)v_{xx}(t,x;w)+\tilde{b}(\pi)v_x(t,x;w)+\lambda \int_{\mathbb{R}}\pi(u)\ln \pi(u)du\Big)=0,
\end{equation}
or, equivalently,
\begin{equation}\label{HJB_2}
v_t(t,x;w)+\min_{\pi\in \mathcal{P}(\mathbb{R})}\int_{\mathbb{R}}\left(\frac{1}{2}\sigma^2 u^2 v_{xx}(t,x;w)+\rho\sigma u v_x(t,x;w)+\lambda \ln\pi(u)\right)\pi(u)du=0,
\end{equation}
with the terminal condition $v(T,x;w)=(x-w)^2-(w-z)^2$. Here $v$ denotes the generic unknown solution to the HJB equation.

Applying the usual verification technique and using the fact that $\pi\in \mathcal{P}(\mathbb{R})$ if and only if
\begin{equation}\label{constrained_problem}
\int_{\mathbb{R}}\pi (u)du=1\quad \text{and}\quad \pi (u)\geq
0\  \text{a.e.}\quad  \text{on}\ \mathbb{R},
\end{equation}
we can solve the (constrained) optimization problem in the HJB equation (\ref{HJB_2}) to obtain a feedback  (distributional) control whose density function is given by
\begin{eqnarray}\nonumber
\boldsymbol{\pi} ^{\ast }(u;t,x,w)&=&\frac{\exp\left(-\frac{1}{\lambda}\left(\frac{1}{2}\sigma^2 u^2 v_{xx}(t,x;w)+\rho \sigma v_x(t,x;w)\right)\right)}{\int_{\mathbb{R}}\exp\left(-\frac{1}{\lambda}\left(\frac{1}{2}\sigma^2 u^2 v_{xx}(t,x;w)+\rho \sigma v_x(t,x;w)\right)\right)du}\\
&=& \mathcal{N}\left(\, u\, \Big | -\frac{\rho}{\sigma}\frac{v_x(t,x)}{v_{xx}(t,x;w)}\ , \ \frac{\lambda}{\sigma^2 v_{xx}(t,x;w)}\right),\label{Gaussian}
\end{eqnarray}
where we have denoted by $\mathcal{N}(u|\alpha,\beta)$ the Gaussian density function with mean $\alpha\in \mathbb{R}$ and variance $\beta>0$. In the above representation, we have assumed that $v_{xx}(t,x;w)>0$, which will be verified in what follows.

Substituting the candidate optimal Gaussian feedback control policy (\ref{Gaussian}) back into the HJB equation (\ref{HJB_2}), the latter is transformed to
\begin{equation}\label{HJB_3}
v_t(t,x;w)-\frac{\rho^2}{2}\frac{v^2_x(t,x;w)}{v_{xx}(t,x,w)}+\frac{\lambda}{2}\left(1-\ln \frac{2\pi e \lambda}{\sigma^2v_{xx}(t,x;w)}\right)=0,
\end{equation}
with $v(T,x;w)=(x-w)^2-(w-z)^2$. A direct computation yields that this equation has a classical solution
\begin{equation}\label{solution_to_HJB}
v(t,x;w)=(x-w)^2e^{-\rho^2(T-t)}+\frac{\lambda \rho ^2}{4}\left(T^2-t^2\right)-\frac{\lambda}{2}\left(\rho^2T-\ln \frac{\sigma^2}{\pi \lambda}\right)(T-t)-(w-z)^2,
\end{equation}
which clearly satisfies  $v_{xx}(t,x;w)>0$, for any $(t,x)\in [0,T]\times \mathbb{R}$. It then follows that the candidate optimal feedback Gaussian control (\ref{Gaussian}) reduces to
\begin{equation}\label{Gaussian_explicit}
\boldsymbol{\pi} ^{\ast }(u;t,x,w)=\mathcal{N}\left(\, u\, \Big| -\frac{\rho}{\sigma}(x-w)\ , \  \frac{\lambda}{2\sigma^2}e^{\rho^2(T-t)}\right),\;\;
(t,x)\in [0,T]\times \mathbb{R}.
\end{equation}

Finally, the  optimal wealth process (\ref{state_process}) under $\boldsymbol{\pi} ^{\ast }$ becomes
\begin{equation}\label{wealth_SDE}
dX^{*}_t=-\rho^2(X^{*}_t-w)\, dt+\sqrt{\rho^2\left(X^{*}_t-w\right)^2+\frac{\lambda}{2}e^{\rho^2(T-t)}}\, dW_t,\;
X^{*}_0=x_0.
\end{equation}
It has a unique strong solution for $0\leq t\leq T$, as can be easily verified.

We now summarize the above results in the following theorem.

\begin{theorem}\label{verification}
The optimal value function of the entropy-regularized exploratory MV problem (\ref{value_function}) is given by
\begin{equation}\label{value_verified}
V(t,x;w)=(x-w)^2e^{-\rho^2(T-t)}+\frac{\lambda \rho ^2}{4}\left(T^2-t^2\right)-\frac{\lambda}{2}\left(\rho^2T-\ln \frac{\sigma^2}{\pi \lambda}\right)(T-t)-(w-z)^2,
\end{equation}%
for $(t,x)\in [0,T]\times \mathbb{R}$.
Moreover,
the optimal feedback control is Gaussian, with its density function given by
\begin{equation}
\boldsymbol{\pi}^{\ast }(u;t,x,w)=\mathcal{N}\left( \, u\, \Big| -\frac{\rho}{\sigma}(x-w)\ , \  \frac{\lambda}{2\sigma^2}e^{\rho^2(T-t)} \right).
\label{Gaussian_verified}
\end{equation}%
The associated optimal wealth process under $\boldsymbol{\pi}^{\ast }$
is the unique solution of the SDE
\begin{equation}\label{wealth_SDE_1}
dX^{*}_t=-\rho^2(X^{*}_t-w)\, dt+\sqrt{\rho^2\left(X^{*}_t-w\right)^2+\frac{\lambda}{2}e^{\rho^2(T-t)}}\, dW_t,\;X^{*}_0=x_0.
\end{equation}
Finally, the Lagrange multiplier $w$ is given by $w=\frac{ze^{\rho^2T}-x_0}{e^{\rho^2T}-1}$.
\end{theorem}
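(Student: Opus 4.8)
The plan is to read the statement as a verification theorem. The candidate value function (\ref{solution_to_HJB}) and the candidate Gaussian policy (\ref{Gaussian_explicit}) have already been produced by formally solving the HJB equation, so what remains is to prove rigorously that they are genuinely optimal and then to pin down $w$ from the mean constraint. First I would confirm by direct substitution that $v$ defined in (\ref{solution_to_HJB}) is a classical solution of the transformed HJB equation (\ref{HJB_3}) with terminal data $v(T,x;w)=(x-w)^2-(w-z)^2$, and in particular that $v_{xx}(t,x;w)=2e^{-\rho^2(T-t)}>0$ on $[0,T]\times\mathbb{R}$; this legitimizes the pointwise minimization that produced (\ref{Gaussian}) and hence (\ref{Gaussian_explicit}).

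The core of the argument is the standard dynamic-programming comparison. Fix any admissible $\pi\in\mathcal{A}(s,y)$ and apply It\^o's formula to $v(t,X^\pi_t;w)$ on $[s,T]$. Because $v$ solves (\ref{HJB_1}), the drift of $dv$, namely $v_t+\tilde b(\pi_t)v_x+\tfrac12\tilde\sigma^2(\pi_t)v_{xx}$, dominates $-\lambda\int_{\mathbb{R}}\pi_t(u)\ln\pi_t(u)\,du$ for every admissible $\pi_t$, with equality exactly when $\pi_t=\boldsymbol{\pi}^*(\cdot;t,X^\pi_t,w)$. Taking expectations and using $v(T,\cdot;w)=(\cdot-w)^2-(w-z)^2$ shows that the objective in (\ref{value_function_general}) is $\ge v(s,y;w)$ for every admissible $\pi$, with equality for the open-loop control generated by $\boldsymbol{\pi}^*$; hence $V(s,y;w)=v(s,y;w)$, which is (\ref{value_verified}), and the minimizer is (\ref{Gaussian_verified}).

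The main obstacle is the integrability bookkeeping needed to make the It\^o step rigorous. I would introduce a localizing sequence of stopping times to kill the stochastic integral $\int_s^{\cdot}\tilde\sigma(\pi_t)v_x(t,X^\pi_t;w)\,dW_t$ and then pass to the limit. The quadratic-in-$x$ form of $v$ makes $v_x=2(X^\pi_t-w)e^{-\rho^2(T-t)}$ affine, so $\tilde\sigma(\pi_t)v_x$ is controlled by $(\mu_t^2+\sigma_t^2)$ together with polynomial moments of $X^\pi$; admissibility condition (iii) supplies the former and a Gronwall moment estimate on (\ref{state_process}) supplies the latter, so the stopped integral is a true martingale and its expectation vanishes, while condition (iv) keeps the terminal and entropy terms integrable so the limit is finite, and dominated/monotone convergence delivers the inequality for all admissible $\pi$. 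Separately I must check that $\boldsymbol{\pi}^*$ generates an admissible open-loop control: the drift of (\ref{wealth_SDE_1}) is linear in $x$ and its diffusion coefficient $\sqrt{\rho^2(x-w)^2+\tfrac{\lambda}{2}e^{\rho^2(T-t)}}$ has $x$-derivative bounded by $\rho$, hence is globally Lipschitz with linear growth, giving a unique strong solution whose Gaussian marginals have all moments finite, so conditions (i)--(iv) hold.

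Finally, the Lagrange multiplier. Taking expectations in (\ref{wealth_SDE_1}) the martingale part drops out and $m(t):=\mathbb{E}[X^*_t]$ solves the linear ODE $m'(t)=-\rho^2(m(t)-w)$ with $m(0)=x_0$, whence $m(t)=w+(x_0-w)e^{-\rho^2 t}$. Imposing the target $\mathbb{E}[X^*_T]=m(T)=z$ and solving for $w$ gives $w=\frac{z-x_0e^{-\rho^2T}}{1-e^{-\rho^2T}}=\frac{ze^{\rho^2T}-x_0}{e^{\rho^2T}-1}$, as claimed.
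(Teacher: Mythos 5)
Your proposal is correct and follows essentially the same route as the paper: the paper gives the verification argument only in outline (deferring the It\^o/localization and admissibility details to Theorem 4 of the cited reference \cite{Hwang}) and then determines the Lagrange multiplier exactly as you do, via the linear ODE for $m(t)=\mathbb{E}[X^*_t]$ obtained after the martingale part drops out. One cosmetic slip: the marginals of $X^*_t$ under (\ref{wealth_SDE_1}) are not Gaussian (only the conditional control distribution is), but the finite-moment conclusion you need still follows from the global Lipschitz and linear-growth properties you established.
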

\begin{proof}
For each fixed $w\in \mathbb{R}$, the verification arguments aim to show that the optimal value function of problem (\ref{value_function}) is given by (\ref{value_verified}) and that the candidate optimal policy (\ref{Gaussian_verified}) is indeed admissible. A detailed proof follows the same lines of that of Theorem $4$ in \cite{Hwang}, and is left for interested readers.

We now determine the Lagrange multiplier $w$ through the constraint $\mathbb{E}[X^*_T]=z$. It follows  from (\ref{wealth_SDE_1}),
along with the standard estimate that $\mathbb{E}\left[\max_{t\in [0,T]}(X^*_t)^2\right]<\infty$ and Fubini's Theorem, that
$$\mathbb{E}[X^*_t]=x_0+\mathbb{E}\left[\int_0^t -\rho^2(X^*_s-w)\, ds\right]=x_0+\int_0^t -\rho^2\left(\mathbb{E}[X^*_s]-w\right)\, ds.$$
Hence,  $\mathbb{E}[X^*_t]=(x_0-w)e^{-\rho^2 t}+w$. The constraint $\mathbb{E}[X^*_T]=z$ now becomes $(x_0-w)e^{-\rho^2T}+w=z$, which gives  $w=\frac{ze^{\rho^2T}-x_0}{e^{\rho^2T}-1}$.
\end{proof}

\smallskip

There are several interesting points to note in this result.
First of all, it follows from Theorem \ref{Theorem_equivalence} in the next section that the classical and the exploratory MV problems have the {\it same} Lagrange multiplier
value due to the fact that the optimal terminal wealths under the respective optimal
feedback controls of the two  problems turn out to have the same mean.\footnote{Theorem \ref{Theorem_equivalence} is a reproduction of the results on the classical MV problem obtained in \cite{MV_zhou}.}
 This latter result is rather surprising at first sight because the exploration greatly alters the underlying system dynamics (compare the dynamics
 (\ref{classical_wealth}) with (\ref{state_process})).

Second, the variance of the optimal Gaussian policy, which measures the level of exploration,   is $\frac{\lambda}{2\sigma^2}e^{\rho^2(T-t)}$ at time $t$. So the exploration decays in time: the agent initially engages in exploration at the maximum level, and reduces it gradually (although never to zero) as time passes and approaches the end of the investment horizon.
Hence, different from its  infinite horizon counterpart  studied in \cite{Hwang}, the extent of exploration  is no longer constant, but, rather, annealing. This is intuitive because, 
as the RL agent learns more about the random environment as time passes, the exploitation becomes more important since there is a deadline $T$ at which her actions will be evaluated. Naturally, exploitation dominates exploration as time approaches maturity.  Theorem \ref{verification} presents such a  decaying exploration scheme {\it endogenously} which, to our best knowledge, has not been derived in the RL literature.

Third, as already noted in \cite{Hwang}, at any given $t\in [0,T]$, the variance of the exploratory Gaussian distribution decreases as the volatility of the risky asset increases, with other parameters being fixed.
The volatility of the risky asset reflects the level of randomness of the investment universe. This hints that a more random environment contains more learning
opportunities, which the RL agent can leverage to reduce her own exploratory endeavor
because, after all, exploration is costly.

Finally, the mean of the Gaussian distribution (\ref{Gaussian_verified}) 
is independent of the exploration weight $\lambda$, while its variance is independent of the state $x$. This highlights  a \textit{perfect separation} between exploitation and exploration, as the former is captured by the mean and  the latter  by the variance of the optimal Gaussian exploration. This property is also consistent with the LQ case in the  infinite horizon  studied in \cite{Hwang}.

\subsection{Solvability equivalence between classical and exploratory MV problems}
In this section, we establish the solvability equivalence between the classical and the exploratory, entropy-regularized MV problems. Note that both problems
can be and indeed have been solved separately and independently. Here by ``solvability equivalence" we mean that the solution of one problem will lead to that of the other
{\it directly}, without needing to solve it separately. This equivalence was first discovered in \cite{Hwang} for the infinite horizon LQ case, and was shown to be instrumental in
deriving  the convergence result (when the exploration weight $\lambda$ decays to $0$)  as well as analyzing the exploration cost therein. Here, the discussions are  mostly parallel; so they will be brief.

Recall the classical MV problem (\ref{unconstrained_classical}). In order to apply dynamic programming, we again consider the set of admissible controls, $\mathcal{A}^{\text{cl}}(s,y)$,
for $(s,y)\in [0,T)\times \mathbb{R}$,
\begin{center}
$\mathcal{A}^{\text{cl}}(s,y):=\Big\{u=\{u_t, t\in [s,T]\}$: $u$ is $\mathcal{F}_t$-progressively measurable and $\mathbb{E}[\int_s^T(u_s)^2\, ds]<\infty\Big\}.$
\end{center}
The (optimal) value function is defined by
\begin{equation}\label{classical_value_function}
V^{\text{cl}}(s,y;w):=\inf_{u\in \mathcal{A}^{\text{cl}}(s,y)}\mathbb{E}\big [(x^u_T-w)^2\, \big | \, x^u_s=y\big ]-(w-z)^2,
\end{equation}
for $(s,y)\in [0,T)\times \mathbb{R}$, where $w\in \mathbb{R}$ is fixed. Once this problem is solved, $w$ can be determined by the constraint $\mathbb{E}[x^*_T]=z$, with $\{x^*_t, t\in [0,T]\}$ being the optimal wealth process under the optimal portfolio $u^*$.

The HJB equation is
\begin{equation}\label{HJB_classical}
\omega_t(t,x;w)+\min_{u\in \mathbb{R}}\left(\frac{1}{2}\sigma^2 u^2\omega_{xx}(t,x;w)+\rho \sigma u \, \omega_x(t,x;w)\right)=0,\;\;(t,x)\in [0,T)\times \mathbb{R},
\end{equation}
with the terminal condition  $\omega(T,x;w)=(x-w)^2-(w-z)^2$.

Standard verification arguments  deduce the optimal value function to be
$$V^{\text{cl}}(t,x;w)=(x-w)^2e^{-\rho^2(T-t)}-(w-z)^2,$$
the optimal feedback control policy to be
\begin{equation}\label{classical_optimal_control}
\boldsymbol{u}^{\ast }(u;t,x,w)=-\frac{\rho}{\sigma}(x-w),
\end{equation}
and the corresponding optimal wealth process to be the unique strong solution to the SDE
\begin{equation}\label{classical_optimal_wealth}
dx^*_t=-\rho^2(x^*_t-w)\, dt-\rho (x^*_t-w)\, dW_t,\quad x^*_0=x_0.
\end{equation}
Comparing the optimal wealth dynamics, (\ref{wealth_SDE_1}) and (\ref{classical_optimal_wealth}), of the exploratory and classical problems,
we note that they have the same {\it drift} coefficient (but different {\it diffusion} coefficients). As a result, the two problems have the same
mean of optimal terminal wealth and hence the same value of
the Lagrange multiplier $w=\frac{ze^{\rho^2T}-x_0}{e^{\rho^2T}-1}$ determined by the constraint $\mathbb{E}[x^*_T]=z$.

We now provide the solvability equivalence between the two problems. The proof is very similar to that of Theorem $7$ in \cite{Hwang}, and is thus omitted.

\begin{theorem}\label{Theorem_equivalence}
The following two statements (a) and (b) are
equivalent.
\begin{description}
\item[(a)] \ The function $v(t,x;w)=(x-w)^2e^{-\rho^2(T-t)}+\frac{\lambda \rho ^2}{4}\left(T^2-t^2\right)-\frac{\lambda}{2}\left(\rho^2T-\ln \frac{\sigma^2}{\pi \lambda}\right)(T-t)-(w-z)^2$, $(t,x)\in [0,T]\times \mathbb{R}$, is the optimal value function of the exploratory MV problem (\ref{value_function}), and the
corresponding optimal feedback control is
\begin{equation*}
\boldsymbol{\pi}^{\ast }(u;t,x,w)=\mathcal{N}\left( \, u\, \Big| -\frac{\rho}{\sigma}(x-w)\ , \  \frac{\lambda}{2\sigma^2}e^{\rho^2(T-t)} \right).
\end{equation*}%

\item[(b)] \ The function $\omega(t,x;w)=(x-w)^2e^{-\rho^2(T-t)}-(w-z)^2$, $(t,x)\in [0,T]\times \mathbb{R}$, is the optimal value function of the classical MV problem (\ref{classical_value_function}), and the corresponding optimal feedback control is
\begin{equation*}
\boldsymbol{u}^{\ast }(t,x;w)=-\frac{\rho}{\sigma}(x-w).
\end{equation*}%
\end{description}
Moreover, the two problems have the same Lagrange multiplier $w=\frac{ze^{\rho^2T}-x_0}{e^{\rho^2T}-1}$.
\end{theorem}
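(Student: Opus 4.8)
The plan is to exploit the fact that the two candidate value functions differ only by a deterministic, state-independent function of time. Setting $g(t):=\frac{\lambda\rho^2}{4}(T^2-t^2)-\frac{\lambda}{2}(\rho^2T-\ln\frac{\sigma^2}{\pi\lambda})(T-t)$, the two displayed functions obey $v(t,x;w)=\omega(t,x;w)+g(t)$, so that $v_x=\omega_x$, $v_{xx}=\omega_{xx}$, and $v_t=\omega_t+g'(t)$. I would prove the equivalence by converting one Hamilton--Jacobi--Bellman equation into the other through this time shift, and then invoke the respective verification theorems.

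For the implication (b)$\,\Rightarrow\,$(a), I assume $\omega$ is the classical optimal value function, so it solves (\ref{HJB_classical}), whose minimized form is $\omega_t-\frac{\rho^2}{2}\frac{\omega_x^2}{\omega_{xx}}=0$ with minimizer $-\frac{\rho}{\sigma}\frac{\omega_x}{\omega_{xx}}$. Substituting $v=\omega+g$ into the exploratory HJB (\ref{HJB_3}) and cancelling the classical relation leaves the scalar identity $g'(t)+\frac{\lambda}{2}(1-\ln\frac{2\pi e\lambda}{\sigma^2\omega_{xx}(t,x;w)})=0$. The decisive observation is that $\omega_{xx}(t,x;w)=2e^{-\rho^2(T-t)}$ is independent of $x$, so the bracketed term is a pure function of $t$; one then checks that the explicit $g$ above is exactly the solution of this first-order ODE subject to $g(T)=0$, the latter also ensuring that $v$ inherits the terminal condition $(x-w)^2-(w-z)^2$. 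Hence $v$ solves (\ref{HJB_3}), and a verification argument identical to that behind Theorem \ref{verification} (modelled on Theorem 4 of \cite{Hwang}) confirms $v$ as the exploratory optimal value function. Because $v_x=\omega_x$ and $v_{xx}=\omega_{xx}$, the mean $-\frac{\rho}{\sigma}\frac{v_x}{v_{xx}}$ of the optimal Gaussian (\ref{Gaussian}) collapses to the classical feedback $-\frac{\rho}{\sigma}(x-w)$, while its variance $\frac{\lambda}{\sigma^2 v_{xx}}$ becomes $\frac{\lambda}{2\sigma^2}e^{\rho^2(T-t)}$, recovering (\ref{Gaussian_verified}). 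The reverse implication (a)$\,\Rightarrow\,$(b) is symmetric: starting from $v$, I set $\omega:=v-g$, run the substitution backwards to see that $\omega$ satisfies the minimized classical HJB, and apply the classical verification theorem to identify $\omega$ and the control $-\frac{\rho}{\sigma}(x-w)$.

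For the common Lagrange multiplier, I would note that the optimal wealth SDEs (\ref{wealth_SDE_1}) and (\ref{classical_optimal_wealth}) carry the identical drift $-\rho^2(x-w)$ and differ only in diffusion. Taking expectations, the mean $m(t)$ of either optimal wealth solves the same linear ODE $m'(t)=-\rho^2(m(t)-w)$ with $m(0)=x_0$, whence $\mathbb{E}[X^*_T]=\mathbb{E}[x^*_T]=(x_0-w)e^{-\rho^2T}+w$; imposing this to equal $z$ yields the common value $w=\frac{ze^{\rho^2T}-x_0}{e^{\rho^2T}-1}$.

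I expect the main obstacle to lie on the verification side rather than in the transformation itself---establishing admissibility of the Gaussian feedback and the integrability condition (iv) over the finite horizon---but since these steps mirror the arguments already cited from \cite{Hwang}, the genuinely new content is the $x$-independence of $\omega_{xx}$, which is precisely what allows a purely time-dependent shift $g(t)$ to carry one HJB equation into the other.
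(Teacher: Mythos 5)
Your proposal is correct and follows essentially the route the paper intends: the paper omits the proof of this theorem (deferring to the analogous Theorem 7 of \cite{Hwang}), but the argument there is precisely the one you give --- the two value functions differ by a purely time-dependent shift $g(t)$ with $g(T)=0$ because $\omega_{xx}$ (equivalently $v_{xx}$) is independent of $x$, so each minimized HJB equation transforms into the other, and the feedback controls are related through $v_x/v_{xx}=\omega_x/\omega_{xx}$. Your derivation of the common Lagrange multiplier from the identical drift $-\rho^2(x-w)$ in the two optimal wealth SDEs is also exactly the computation the paper carries out in the text preceding the theorem and in the proof of Theorem \ref{verification}.
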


It is reasonable to expect that the exploratory problem converges to its
classical counterpart as the exploration weight $\lambda$ decreases to 0. The following result makes this precise.

\begin{theorem}\label{convergence_to_Dirac}
Assume that statement (a) (or equivalently, (b)) of Theorem \ref{Theorem_equivalence} holds. Then, for each $(t,x,w)\in [0,T]\times \mathbb{R}\times \mathbb{R}$,
$$\lim_{\lambda \rightarrow 0}\boldsymbol{\pi}^{\ast}(\cdot;t,x;w)=\delta_{\boldsymbol{u}^{\ast}(t,x;w)}(\cdot) \;\;\mbox{ weakly.}$$
Moreover, 
$$\lim_{\lambda \rightarrow 0}|V(t,x;w)-V^{\text{cl}}(t,x;w)|=0.$$

\end{theorem}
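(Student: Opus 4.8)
The plan is to treat the two assertions separately; each reduces to a direct limit computation off the closed-form solution, so the proof is short.

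For the weak convergence of the policies, I would start from the explicit form (\ref{Gaussian_verified}): the optimal feedback policy $\boldsymbol{\pi}^{\ast}(\cdot;t,x,w)$ is the Gaussian density $\mathcal{N}(\,\cdot\mid\alpha,\beta_\lambda)$ with mean $\alpha=-\frac{\rho}{\sigma}(x-w)$, which is precisely $\boldsymbol{u}^{\ast}(t,x;w)$ and is \emph{independent} of $\lambda$, and with variance $\beta_\lambda=\frac{\lambda}{2\sigma^2}e^{\rho^2(T-t)}$. For each fixed $(t,x,w)$ the factor $e^{\rho^2(T-t)}$ is a finite constant, so $\beta_\lambda\to 0$ as $\lambda\to 0$. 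Thus I am reduced to the elementary fact that a Gaussian with fixed mean and vanishing variance converges weakly to the Dirac mass at that mean. I would make this precise via characteristic functions: the characteristic function of $\boldsymbol{\pi}^{\ast}$ is $\exp(i\xi\alpha-\tfrac12\beta_\lambda\xi^2)$, which converges pointwise in $\xi$ to $\exp(i\xi\alpha)$, the characteristic function of $\delta_{\boldsymbol{u}^{\ast}(t,x;w)}$; L\'evy's continuity theorem then yields the claimed weak convergence. Equivalently, one checks directly that $\int_{\mathbb{R}}\phi(u)\boldsymbol{\pi}^{\ast}(u;t,x,w)\,du\to\phi(\boldsymbol{u}^{\ast}(t,x;w))$ for every bounded continuous $\phi$, via the substitution $u=\alpha+\sqrt{\beta_\lambda}\,y$ and dominated convergence.

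For the value function convergence, I would simply subtract the two explicit expressions from Theorem \ref{verification} and statement (b) of Theorem \ref{Theorem_equivalence}. The leading term $(x-w)^2e^{-\rho^2(T-t)}$ and the constant $-(w-z)^2$ are identical in $V$ and $V^{\text{cl}}$ and cancel, leaving
\begin{equation*}
V(t,x;w)-V^{\text{cl}}(t,x;w)=\frac{\lambda\rho^2}{4}(T^2-t^2)-\frac{\lambda}{2}\rho^2 T(T-t)+\frac{\lambda}{2}\Big(\ln\frac{\sigma^2}{\pi}-\ln\lambda\Big)(T-t).
\end{equation*}
The first two terms are linear in $\lambda$ and hence vanish as $\lambda\to 0$, as does the $\frac{\lambda}{2}\ln\frac{\sigma^2}{\pi}(T-t)$ piece. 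Taking absolute values and letting $\lambda\to 0$ then gives the result.

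The proof has no genuine obstacle: both parts are limit computations off the explicit solution. The only point requiring the slightest care is the term $-\frac{\lambda}{2}(\ln\lambda)(T-t)$ in the second part, which is exactly the contribution of the differential entropy and the $\lambda$-dependence hidden inside the logarithm in (\ref{value_verified}); it is harmless because of the standard limit $\lim_{\lambda\to 0^+}\lambda\ln\lambda=0$. Everything else is either an exact cancellation or a term that is manifestly $O(\lambda)$.
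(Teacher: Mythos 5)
Your proposal is correct and follows essentially the same route as the paper: the paper's own proof simply observes that the weak convergence follows from the explicit Gaussian form of $\boldsymbol{\pi}^{\ast}$ (fixed mean $\boldsymbol{u}^{\ast}(t,x;w)$, variance of order $\lambda$) and that the value-function difference vanishes because every residual term is $O(\lambda)$ or $O(\lambda\ln\lambda)$, citing $\lim_{\lambda\rightarrow 0}\frac{\lambda}{2}\ln\frac{\sigma^2}{\pi\lambda}=0$. You merely supply the details (characteristic functions for the weak limit, the explicit cancellation for the values) that the paper leaves to the reader.
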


\begin{proof}
The weak convergence of the feedback controls follows from the explicit forms of $\boldsymbol{\pi}^{\ast}$  and $\boldsymbol{u}^{\ast}$ in statements (a) and (b). The pointwise convergence of the value functions follows easily from the forms of $V(\cdot )$ and $%
V^{\text{cl}}(\cdot )$, together with the fact that
\[ \lim_{\lambda \rightarrow 0}\frac{\lambda}{2}\ln\frac{\sigma^2}{\pi \lambda} =0.
\]
\end{proof}

Finally, we conclude this section by examining the cost of the exploration. This was originally defined and derived in \cite{Hwang} for the  infinite horizon setting.  Here, the cost associated with the MV problem due to the explicit inclusion of exploration in the objective (\ref{value_function}) is defined by
$$\mathcal{C}^{u^{\ast },\pi ^{\ast }}(0,x_0;w):=\left( V(0,x_0;w)-{\lambda}\mathbb{E}\left[
\left. \int_{0}^{T }\int_{{\mathbb{R}}}\pi _{t}^{\ast }(u)\ln
\pi _{t}^{\ast }(u)du \,dt\right \vert X_{0}^{\pi^*}=x_0\right] \right)$$
\begin{equation}
-V^{\text{cl}}(0,x_0;w),
\label{exploration_cost}
\end{equation}
\smallskip
for $x_0\in \mathbb{R}$, where $\pi^*=\{\pi^*_t, t\in [0,T]\}$ is the (open-loop) optimal  strategy generated by the optimal feedback law $\boldsymbol{\pi}^{\ast }$ with respect to the initial condition $X_{0}^{\pi^*}=x_0$.
This cost is the difference between the two optimal value functions, adjusting for the additional contribution due to the entropy value of the optimal exploratory strategy.

Making use of Theorem \ref{Theorem_equivalence}, we have the following result.

\begin{theorem}\label{exploration_cost_theorem}
Assume that statement (a) (or equivalently, (b)) of Theorem \ref{Theorem_equivalence} holds. Then, the exploration cost for the MV problem is
\begin{equation}\label{LQ_cost_theorem}
\mathcal{C}^{u^{\ast },\pi ^{\ast }}(0,x_0;w)
=\frac{\lambda T}{2},\;\;x_0\in \mathbb{R},\;\;w\in \mathbb{R}.
\end{equation}
\end{theorem}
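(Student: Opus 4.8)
The plan is to evaluate each of the three ingredients in the definition (\ref{exploration_cost}) explicitly at the initial time $t=0$ and to watch the state-dependent and $\rho$-dependent pieces cancel, leaving only $\tfrac{\lambda T}{2}$. First I would read off the two optimal value functions at $t=0$ from Theorems \ref{verification} and \ref{Theorem_equivalence}:
\[
V(0,x_0;w)=(x_0-w)^2 e^{-\rho^2 T}+\frac{\lambda\rho^2}{4}T^2-\frac{\lambda}{2}\Big(\rho^2 T-\ln\frac{\sigma^2}{\pi\lambda}\Big)T-(w-z)^2,
\]
and $V^{\text{cl}}(0,x_0;w)=(x_0-w)^2 e^{-\rho^2 T}-(w-z)^2$. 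Subtracting, the quadratic term in $x_0$ and the $(w-z)^2$ term disappear, so
\[
V(0,x_0;w)-V^{\text{cl}}(0,x_0;w)=\frac{\lambda T}{2}\ln\frac{\sigma^2}{\pi\lambda}-\frac{\lambda\rho^2 T^2}{4}.
\]

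The second step is to compute the entropy term. By Theorem \ref{verification} the optimal open-loop control $\pi^*_t$ has at each $t$ the Gaussian density with variance $\beta_t=\frac{\lambda}{2\sigma^2}e^{\rho^2(T-t)}$, so I would invoke the standard identity that the differential entropy of $\mathcal{N}(\alpha,\beta)$ equals $\frac{1}{2}\ln(2\pi e\beta)$, which depends only on $\beta$. Crucially $\beta_t$ is a deterministic function of $t$, hence the integrand
\[
\int_{\mathbb{R}}\pi^*_t(u)\ln\pi^*_t(u)\,du=-\frac{1}{2}\ln\big(2\pi e\,\beta_t\big)=-\frac{1}{2}\ln\frac{\pi e\lambda}{\sigma^2}-\frac{\rho^2(T-t)}{2}
\]
carries no randomness, so the conditional expectation in (\ref{exploration_cost}) is immaterial. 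Integrating over $[0,T]$, using $\int_0^T(T-t)\,dt=\tfrac{T^2}{2}$, gives
\[
\lambda\,\mathbb{E}\Big[\int_0^T\!\!\int_{\mathbb{R}}\pi^*_t(u)\ln\pi^*_t(u)\,du\,dt\Big]=-\frac{\lambda T}{2}\ln\frac{\pi e\lambda}{\sigma^2}-\frac{\lambda\rho^2 T^2}{4}.
\]

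Finally I would assemble the pieces. Substituting the two displays into (\ref{exploration_cost}),
\[
\mathcal{C}^{u^*,\pi^*}(0,x_0;w)=\Big(\frac{\lambda T}{2}\ln\frac{\sigma^2}{\pi\lambda}-\frac{\lambda\rho^2 T^2}{4}\Big)+\Big(\frac{\lambda T}{2}\ln\frac{\pi e\lambda}{\sigma^2}+\frac{\lambda\rho^2 T^2}{4}\Big).
\]
The two $\tfrac{\lambda\rho^2 T^2}{4}$ terms cancel, and combining the logarithms gives $\frac{\lambda T}{2}\ln\big(\frac{\sigma^2}{\pi\lambda}\cdot\frac{\pi e\lambda}{\sigma^2}\big)=\frac{\lambda T}{2}\ln e=\frac{\lambda T}{2}$, as claimed. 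This computation is essentially routine; the only points needing care are recalling the Gaussian differential-entropy identity and noting that the entropy integrand is deterministic, so that the second-moment estimate $\mathbb{E}[\max_{t\in[0,T]}(X^*_t)^2]<\infty$ invoked in Theorem \ref{verification}, together with Fubini, justifies the finiteness and the interchange of expectation and time integration. I expect no genuine obstacle here beyond careful bookkeeping of the logarithmic and $\rho^2 T^2$ terms, which is exactly where the cancellation producing the clean answer $\tfrac{\lambda T}{2}$ occurs.
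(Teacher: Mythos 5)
Your proposal is correct and follows essentially the same route as the paper: compute the Gaussian differential entropy of $\pi^*_t$ (which is deterministic in $t$), integrate over $[0,T]$, and subtract the two explicit value functions so that the $\frac{\lambda\rho^2T^2}{4}$ terms and the logarithms collapse to $\frac{\lambda T}{2}$. The paper's proof is just a terser version of this same calculation, recording the entropy identity $\int_{\mathbb{R}}\pi^{\ast}_t(u)\ln\pi^{\ast}_t(u)\,du=-\frac{1}{2}\ln\bigl(\frac{\pi e\lambda}{\sigma^2}e^{\rho^2(T-t)}\bigr)$ and leaving the remaining bookkeeping to the reader.
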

\begin{proof}
Let $\{\pi^{\ast }_t,t\in [0,T]\}$ be the open-loop control generated by the feedback control $\boldsymbol{\pi}^{\ast }$ given in statement (a) with respect to the initial state $x_0$ at $t=0$, namely,
\[
\pi^{\ast }_t(u)=\mathcal{N}\left( \, u\, \Big| -\frac{\rho}{\sigma}(X^*_t-w)\ , \  \frac{\lambda}{2\sigma^2}e^{\rho^2(T-t)} \right)
\]
where $\{X_{t}^{\ast }, t\in [0,T]\}$ is the
corresponding  optimal wealth process of the exploratory  MV problem, starting from the state $x_0$ at $t=0$, when $\boldsymbol{\pi}^{\ast }$ is applied.
Then, we easily deduce that
\[ \int_{\mathbb{R}}\pi^{\ast }_t(u)\ln \pi^{\ast }_t(u)du=-\frac{1}{2}\ln \left(\frac{\pi e \lambda}{\sigma^2}e^{\rho^2(T-t)}\right).\]
The desired result now follows immediately from the expressions of $V(\cdot )$ in (a) and $V^{\text{cl}}(\cdot)$ in (b).
\end{proof}
\medskip

The exploration cost depends only on two ``agent-specific" parameters, the exploration weight $\lambda>0$ and the investment horizon $T>0$. Note that the latter is also  the exploration horizon. Our result is intuitive in that the exploration cost increases with the exploration weight and  with the exploration horizon. Indeed,  the dependence is  {\it linear} with respect to each of the two attributes: $\lambda$ and $T$.\footnote{In
\cite{Hwang} with the infinite horizon LQ case, an analogous result is obtained which states that exploration cost is proportional to the exploration weight and inversely proportional to the discount factor. Clearly, here the length of time horizon, $T$, plays a role similar to the inverse of the discount factor.} It is also interesting to note that the cost is independent of the Lagrange multiplier. This suggests that the exploration cost will not increase when the agent is more aggressive (or risk-seeking) -- reflected by the expected target $z$ or equivalently the  Lagrange multiplier $w$.

\section{RL Algorithm Design}
Having laid the theoretical foundation in the previous two sections, we now  design an RL algorithm  to {\it learn} the solution of the entropy-regularized MV  problem and to output implementable portfolio allocation strategies, without assuming  any knowledge about the underlying parameters. To this end, we will first  establish a so-called {\it policy improvement theorem } as well as the corresponding convergence result. 
Meanwhile, we will provide a self-correcting scheme to learn the true Lagrange multiplier $w$, based on stochastic approximation. Our RL algorithm bypasses the phase of estimating any model parameters, including the mean return vector and the variance-covariance matrix. It also avoids inverting a typically ill-conditioned variance-covariance matrix in high dimensions that would likely produce non-robust portfolio strategies.

In this paper, rather than relying on the typical framework of discrete-time MDP (that is used  for most RL problems) and discretizing time and space accordingly, we design an algorithm to learn the solutions of the continuous-time exploratory MV problem (\ref{value_function}) {\it directly}. Specifically, we adopt the approach  developed in \cite{Doya} to avoid  discretization of the state dynamics or the HJB equation. As pointed out in \cite{Doya}, it is typically challenging to find the right granularity to discretize the state, action and time, and
naive discretization  may lead to poor performance. On the other hand, grid-based discretization methods for solving the HJB equation cannot be easily extended to high-dimensional state space in practice due to the curse of dimensionality, although theoretical convergence results have been established (see \cite{Munos_1, Munos_2}). Our algorithm, to be described in Subsection 4.2, however, makes use of a provable policy improvement theorem and fairly simple yet efficient functional approximations to directly learn the value functions and the optimal Gaussian policy. Moreover, it is computationally feasible and implementable in high-dimensional state spaces (i.e., in the case of a large number of  risky assets) due to the explicit representation of the value functions and the portfolio strategies, thereby devoid of  the curse of dimensionality. Note that our algorithm does not use (deep) neural networks, which have been applied extensively in literature for (high-dimensional) continuous RL problems (e.g., \cite{Lillicrap}, \cite{DQN}) but known for unstable performance, sample inefficiency as well as extensive hyperparameter tuning (\cite{DQN}), in addition to their low interpretability.\footnote{Interpretability is one of the most important and pressing issues in the general artificial intelligence applications in financial industry due to, among others,
the regulatory requirement.}

\subsection{A policy improvement theorem}

Most RL algorithms consist of two iterative procedures: policy evaluation and policy improvement (\cite{SB}). The former  provides an estimated value function for the current policy, whereas the latter updates the current policy in the {\it right} direction to improve the value function.
A policy improvement theorem (PIT) is therefore
 a crucial prerequisite  for interpretable RL algorithms that ensures the iterated value functions to be non-increasing (in the case of a minimization problem), and ultimately converge to the optimal value function; see, for example, Section 4.2 of \cite{SB}.
PITs have been proved for discrete-time entropy-regularized RL problems in infinite horizon (\cite{Ha}), and for continuous-time classical stochastic control problems (\cite{PIT}).\footnote{\cite{PIT} studied classical stochastic control problems with no distributional controls nor entropy regularization. They did not consider  RL and related issues including exploration.} The following result provides a PIT for our  exploratory MV portfolio selection problem.

\begin{theorem}[Policy Improvement Theorem]\label{PIT}
 Let $w\in \mathbb{R}$ be fixed and  $\boldsymbol{\pi}=\boldsymbol{\pi}(\cdot;\cdot,\cdot,w)$ be an arbitrarily given admissible  feedback control policy. Suppose that the corresponding value function $V^{\boldsymbol{\pi}}(\cdot,\cdot;w)\in C^{1,2}([0,T)\times \mathbb{R})\cap C^0([0,T]\times \mathbb{R})$ and satisfies $V^{\boldsymbol{\pi}}_{xx}(t,x;w)>0$, for any $(t,x)\in [0,T)\times \mathbb{R}$. Suppose further that the feedback policy $\tilde{{\boldsymbol{\pi}}}$ defined by
\begin{equation}\label{new_policy}
\tilde{\boldsymbol{\pi}}(u;t,x,w)=\mathcal{N}\left( \, u\, \Big| -\frac{\rho}{\sigma}\frac{V^{\boldsymbol{\pi}}_x(t,x;w)}{V^{\boldsymbol{\pi}}_{xx}(t,x;w)}\ , \  \frac{\lambda}{\sigma^2V^{\boldsymbol{\pi}}_{xx}(t,x;w)} \right)
\end{equation}
is admissible. Then, 
\begin{equation}\label{value_improve}
V^{\tilde{{\boldsymbol{\pi}}}}(t,x;w)\leq V^{\boldsymbol{\pi}}(t,x;w),\quad (t,x)\in [0,T]\times \mathbb{R}.
\end{equation}
\end{theorem}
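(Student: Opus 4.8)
The plan is to run a verification-type argument along the dynamics generated by the updated policy $\tilde{\boldsymbol{\pi}}$, exploiting the fact that $\tilde{\boldsymbol{\pi}}$ is by construction the pointwise minimizer of the Hamiltonian associated with the \emph{current} value function $V^{\boldsymbol{\pi}}$. First I would establish the policy-evaluation equation satisfied by $V^{\boldsymbol{\pi}}$. Since $V^{\boldsymbol{\pi}}(\cdot,\cdot;w)\in C^{1,2}([0,T)\times\mathbb{R})\cap C^0([0,T]\times\mathbb{R})$ by hypothesis, applying It\^o's formula to $V^{\boldsymbol{\pi}}(t,X^{\boldsymbol{\pi}}_t;w)$ along the $\boldsymbol{\pi}$-dynamics (\ref{new_dynamics_feedback}) together with the dynamic-consistency (flow) property of the objective (\ref{general_value}) yields, by a standard Feynman--Kac argument, the linear PDE
\[
V^{\boldsymbol{\pi}}_t(t,x;w)+\tilde{b}(\boldsymbol{\pi})V^{\boldsymbol{\pi}}_x(t,x;w)+\tfrac{1}{2}\tilde{\sigma}^2(\boldsymbol{\pi})V^{\boldsymbol{\pi}}_{xx}(t,x;w)+\lambda\int_{\mathbb{R}}\boldsymbol{\pi}(u;t,x,w)\ln\boldsymbol{\pi}(u;t,x,w)\,du=0,
\]
with terminal condition $V^{\boldsymbol{\pi}}(T,x;w)=(x-w)^2-(w-z)^2$, where $\tilde{b}$ and $\tilde{\sigma}$ are as in (\ref{drift})--(\ref{volatility}) evaluated at $\boldsymbol{\pi}(\cdot;t,x,w)$.

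Next I would introduce, for a generic density $\phi\in\mathcal{P}(\mathbb{R})$, the Hamiltonian
\[
\mathcal{H}(t,x,\phi):=\int_{\mathbb{R}}\Big(\tfrac{1}{2}\sigma^2 u^2 V^{\boldsymbol{\pi}}_{xx}(t,x;w)+\rho\sigma u\,V^{\boldsymbol{\pi}}_x(t,x;w)+\lambda\ln\phi(u)\Big)\phi(u)\,du.
\]
Because $V^{\boldsymbol{\pi}}_{xx}>0$, the very same constrained optimization solved in deriving (\ref{Gaussian}) (with $v$ replaced by $V^{\boldsymbol{\pi}}$) shows that the Gaussian density $\tilde{\boldsymbol{\pi}}(\cdot;t,x,w)$ in (\ref{new_policy}) is the unique minimizer of $\mathcal{H}(t,x,\cdot)$ over $\mathcal{P}(\mathbb{R})$; in particular $\mathcal{H}(t,x,\tilde{\boldsymbol{\pi}})\le\mathcal{H}(t,x,\boldsymbol{\pi})$. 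Since the evaluation equation above states precisely that $V^{\boldsymbol{\pi}}_t+\mathcal{H}(t,x,\boldsymbol{\pi})=0$, I obtain the key differential inequality
\[
V^{\boldsymbol{\pi}}_t(t,x;w)+\tilde{b}(\tilde{\boldsymbol{\pi}})V^{\boldsymbol{\pi}}_x(t,x;w)+\tfrac{1}{2}\tilde{\sigma}^2(\tilde{\boldsymbol{\pi}})V^{\boldsymbol{\pi}}_{xx}(t,x;w)+\lambda\int_{\mathbb{R}}\tilde{\boldsymbol{\pi}}\ln\tilde{\boldsymbol{\pi}}\,du\le 0.
\]

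I would then fix $(s,x)$, let $\{X^{\tilde{\boldsymbol{\pi}}}_t,t\in[s,T]\}$ solve the dynamics generated by $\tilde{\boldsymbol{\pi}}$ from $X^{\tilde{\boldsymbol{\pi}}}_s=x$, and apply It\^o's formula to $V^{\boldsymbol{\pi}}(t,X^{\tilde{\boldsymbol{\pi}}}_t;w)$. The drift equals the left-hand side of the last inequality minus the entropy term, so it is bounded above by $-\lambda\int_{\mathbb{R}}\tilde{\boldsymbol{\pi}}\ln\tilde{\boldsymbol{\pi}}\,du$. Integrating from $s$ to $T$, taking conditional expectations (the It\^o integral having zero mean after the localization discussed below), and substituting the terminal condition $V^{\boldsymbol{\pi}}(T,X^{\tilde{\boldsymbol{\pi}}}_T;w)=(X^{\tilde{\boldsymbol{\pi}}}_T-w)^2-(w-z)^2$, I arrive at
\[
V^{\boldsymbol{\pi}}(s,x;w)\ge \mathbb{E}\Big[(X^{\tilde{\boldsymbol{\pi}}}_T-w)^2+\lambda\int_s^T\int_{\mathbb{R}}\tilde{\boldsymbol{\pi}}\ln\tilde{\boldsymbol{\pi}}\,du\,dt\ \Big|\ X^{\tilde{\boldsymbol{\pi}}}_s=x\Big]-(w-z)^2,
\]
whose right-hand side is exactly $V^{\tilde{\boldsymbol{\pi}}}(s,x;w)$ by the definition (\ref{general_value}). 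This is the claimed inequality (\ref{value_improve}).

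The step I expect to be the main obstacle is the justification of taking expectations, i.e.\ showing that the local martingale $\int_s^{\cdot}\tilde{\sigma}(\tilde{\boldsymbol{\pi}})V^{\boldsymbol{\pi}}_x\,dW_t$ is a true martingale (equivalently, that $\mathbb{E}\int_s^T\tilde{\sigma}^2(\tilde{\boldsymbol{\pi}})(V^{\boldsymbol{\pi}}_x)^2\,dt<\infty$). Since $V^{\boldsymbol{\pi}}$ need not have bounded derivatives a priori, I would handle this by a standard localization: insert stopping times $\tau_n=\inf\{t\ge s:\,|X^{\tilde{\boldsymbol{\pi}}}_t|\ge n\}\wedge T$, apply the argument on $[s,\tau_n]$ where the stochastic integral is a genuine martingale, and then let $n\to\infty$, passing to the limit by dominated/monotone convergence. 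The requisite uniform integrability follows from the admissibility of $\tilde{\boldsymbol{\pi}}$ (conditions (iii)--(iv)) and the moment estimate $\mathbb{E}[\max_{t\in[s,T]}(X^{\tilde{\boldsymbol{\pi}}}_t)^2]<\infty$, which also controls the non-negativity-type boundary contributions arising from the terminal quadratic. The remaining ingredients (the Feynman--Kac derivation of the evaluation PDE and the Gaussian minimization of $\mathcal{H}$) are routine given the assumed $C^{1,2}$ regularity and $V^{\boldsymbol{\pi}}_{xx}>0$.
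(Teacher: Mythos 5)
Your proposal is correct and follows essentially the same route as the paper's proof: establish the policy-evaluation (Feynman--Kac) equation for $V^{\boldsymbol{\pi}}$, observe that $\tilde{\boldsymbol{\pi}}$ is the pointwise minimizer of the associated Hamiltonian so that $V^{\boldsymbol{\pi}}$ becomes a supersolution along the $\tilde{\boldsymbol{\pi}}$-dynamics, then apply It\^o's formula with a localizing sequence of stopping times, pass to the limit by dominated convergence using admissibility, and match the terminal condition. The only cosmetic difference is that you localize via exit times of $X^{\tilde{\boldsymbol{\pi}}}$ while the paper localizes the quadratic variation of the stochastic integral directly; both serve the same purpose.
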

\begin{proof}
Fix $(t,x)\in [0,T]\times \mathbb{R}$. Since, by assumption, the feedback policy $\tilde{{\boldsymbol{\pi}}}$ is admissible,
the open-loop control strategy, $\tilde{\pi}=\{\tilde{\pi}_v, v\in [t,T]\}$, generated from $\tilde{{\boldsymbol{\pi}}}$ with respect to the initial condition ${X}^{\tilde{{\boldsymbol{\pi}}}}_t=x$ is  admissible. Let $\{{X}^{\tilde{{\boldsymbol{\pi}}}}_s,s\in[t,T]\}$ be the corresponding wealth process under $\tilde{\pi}$. Applying It\^{o}'s formula, we have
$$V^{\boldsymbol{\pi}}(s,\tilde{X}_s)=V^{\boldsymbol{\pi}}(t,x)+\int_t^sV^{\boldsymbol{\pi}}_t(v,{X}^{\tilde{{\boldsymbol{\pi}}}}_v)dv+\int_t^s\int_{\mathbb{R}}\Big(\frac{1}{2}\sigma^2u^2V^{\boldsymbol{\pi}}_{xx}(v,{X}^{\tilde{{\boldsymbol{\pi}}}}_v)$$
\begin{equation}\label{Ito}
+\rho \sigma u V^{\boldsymbol{\pi}}_x(v,{X}^{\tilde{{\boldsymbol{\pi}}}}_v)\Big)\tilde{\pi}_v(u) \, dudv+\int_t^s \sigma \left(\int_{\mathbb{R}}u^2\tilde{\pi}_v(u)du\right)^{\frac{1}{2}}V^{\boldsymbol{\pi}}(v,{X}^{\tilde{{\boldsymbol{\pi}}}}_v)\, dW_v,\;s\in [t,T].
\end{equation}
Define the stopping times $\tau_n:=\inf\{s\geq t: \int_t^s \sigma^2\int_{\mathbb{R}}u^2\tilde{\pi}_v(u)du\left(V^{\boldsymbol{\pi}}(v,{X}^{\tilde{{\boldsymbol{\pi}}}}_v)\right)^2dv\geq n\}$, for $n\geq 1$. Then, from (\ref{Ito}), we obtain
$$V^{\boldsymbol{\pi}}(t,x)=\mathbb{E}\Big[V^{\boldsymbol{\pi}}(s\wedge\tau_n,{X}^{\tilde{{\boldsymbol{\pi}}}}_{s\wedge \tau_n})-\int_t^{s\wedge\tau_n}V^{\boldsymbol{\pi}}_t(v,{X}^{\tilde{{\boldsymbol{\pi}}}}_v)dv$$
\begin{equation}\label{stopped}
-\int_t^{s\wedge\tau_n}\int_{\mathbb{R}}\Big(\frac{1}{2}\sigma^2u^2V^{\boldsymbol{\pi}}_{xx}(v,{X}^{\tilde{{\boldsymbol{\pi}}}}_v)+\rho \sigma u V^{\boldsymbol{\pi}}_x(v,{X}^{\tilde{{\boldsymbol{\pi}}}}_v)\Big)\tilde{\pi}_v (u)\, dudv\, \Big| {X}^{\tilde{{\boldsymbol{\pi}}}}_t=x\Big].
\end{equation}
On the other hand, by standard arguments and the assumption that $V^{\boldsymbol{\pi}}$ is smooth, we have
$$V_t^{\boldsymbol{\pi}}(t,x)+\int_{\mathbb{R}}\left(\frac{1}{2}\sigma^2 u^2 V^{\boldsymbol{\pi}}_{xx}(t,x)+\rho\sigma u V^{\boldsymbol{\pi}}_x(t,x)+\lambda\ln\boldsymbol{\pi}(u;t,x)\right)\boldsymbol{\pi}(u;t,x)du=0,$$
for any $(t,x)\in [0,T)\times \mathbb{R}$. It follows that
\begin{equation}\label{minimization}
V_t^{\boldsymbol{\pi}}(t,x)+\min_{\pi'\in\mathcal{P}(\mathbb{R})}\int_{\mathbb{R}}\left(\frac{1}{2}\sigma^2 u^2 V^{\boldsymbol{\pi}}_{xx}(t,x)+\rho\sigma u V^{\boldsymbol{\pi}}_x(t,x)+\lambda\ln\pi'(u)\right)\pi'(u)du\leq 0.
\end{equation}
Notice that the minimizer of the Hamiltonian in (\ref{minimization}) is given by the feedback policy $\tilde{\boldsymbol{\pi}}$ in (\ref{new_policy}). It then follows that equation (\ref{stopped}) implies
$$V^{\boldsymbol{\pi}}(t,x)\geq\mathbb{E}\Big[V^{\boldsymbol{\pi}}(s\wedge\tau_n,{X}^{\tilde{{\boldsymbol{\pi}}}}_{s\wedge \tau_n})+\lambda \int_t^{s\wedge\tau_n}\int_{\mathbb{R}}\tilde{\pi}_v(u)\ln\tilde{\pi}_v(u)\, dudv\Big | {X}^{\tilde{{\boldsymbol{\pi}}}}_t=x\Big],$$
for $(t,x)\in [0,T]\times \mathbb{R}$ and $s\in [t,T]$. Now taking $s=T$, and using that $V^{\boldsymbol{\pi}}(T,x)=V^{\tilde{{\boldsymbol{\pi}}}}(T,x)=(x-w)^2-(w-z)^2$ together with the assumption that $\tilde{\pi}$ is admissible, we obtain, by sending  $n\rightarrow \infty$ and applying the dominated convergence theorem, that
$$V^{\boldsymbol{\pi}}(t,x)\geq\mathbb{E}\Big[V^{\tilde{{\boldsymbol{\pi}}}}(T,{X}^{\tilde{{\boldsymbol{\pi}}}}_{T})+\lambda \int_t^{T}\int_{\mathbb{R}}\tilde{\pi}_v(u)\ln\tilde{\pi}_v(u)\, dudv\Big | {X}^{\tilde{{\boldsymbol{\pi}}}}_t=x\Big]=V^{\tilde{{\boldsymbol{\pi}}}}(t,x),$$
for any $(t,x)\in [0,T]\times \mathbb{R}$.
\end{proof}

\medskip

The above theorem suggests that there are always policies in the Gaussian family
that improves
the value function of any given, not necessarily Gaussian, policy. Hence, without loss of generality, we can simply focus on
the Gaussian policies when choosing an initial solution. Moreover, the optimal Gaussian policy (\ref{Gaussian_verified}) in Theorem \ref{verification} suggests that a candidate initial feedback policy may take the form $\boldsymbol{\pi}_0(u;t,x,w)=\mathcal{N}(u| a(x-w), c_1e^{c_2(T-t)})$. It turns out that, theoretically,  such a choice leads to the convergence of both the value functions and the policies in a {\it finite} number of iterations. 

\begin{theorem}\label{convergence_learning}
Let $\boldsymbol{\pi}_0(u;t,x,w)=\mathcal{N}(u| a(x-w), c_1e^{c_2(T-t)})$, with $a,c_2\in \mathbb{R}$ and $c_1>0$. Denote by $\{\boldsymbol{\pi}_n(u;t,x,w), (t,x)\in [0,T]\times \mathbb{R},n\geq 1\}$ the sequence of feedback policies updated by the policy improvement scheme (\ref{new_policy}), and $\{V^{\boldsymbol{\pi}_n}(t,x;w), (t,x)\in [0,T]\times \mathbb{R}, n\geq 1\}$ the sequence of the corresponding value functions. Then,
\begin{equation}
\lim_{n\rightarrow \infty} \boldsymbol{\pi}_n (\cdot; t,x,w)= \boldsymbol{\pi^*}(\cdot; t,x,w) \;\;\mbox{ weakly,}
\end{equation}
and
\begin{equation}
\lim_{n\rightarrow \infty} V^{\boldsymbol{\pi}_n}(t,x;w)=V(t,x;w),
\end{equation}
for any $(t,x,w)\in [0,T]\times \mathbb{R}\times \mathbb{R}$, where $\boldsymbol{\pi^*}$ and $V$ are the optimal Gaussian policy (\ref{Gaussian_verified}) and the optimal value function (\ref{value_verified}), respectively.
\end{theorem}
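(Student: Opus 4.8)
The plan is to prove something sharper than the stated limits: starting from any Gaussian of the prescribed form, the scheme in fact reaches $\boldsymbol{\pi}^*$ and $V$ after at most two iterations, so the sequences are eventually constant and the two limits hold trivially. The engine is that the ``quadratic-in-$(x-w)$ plus time-dependent constant'' structure of the value function is preserved under both policy evaluation and the improvement map (\ref{new_policy}).

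First I would derive the policy-evaluation equation for an arbitrary Gaussian feedback policy $\boldsymbol{\pi}(u;t,x,w)=\mathcal{N}(u\,|\,\alpha(t,x),\beta(t))$. Substituting the Gaussian moments $\int u\,\boldsymbol{\pi}\,du=\alpha$, $\int u^2\boldsymbol{\pi}\,du=\alpha^2+\beta$, and $\int\boldsymbol{\pi}\ln\boldsymbol{\pi}\,du=-\tfrac12\ln(2\pi e\beta)$ into the linear PDE for $V^{\boldsymbol{\pi}}$ (the equation displayed just before (\ref{minimization})) turns it into a genuine linear parabolic PDE with terminal datum $(x-w)^2-(w-z)^2$. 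For the initial policy $\boldsymbol{\pi}_0$, where $\alpha=a(x-w)$ and $\beta=c_1e^{c_2(T-t)}$, the drift $\rho\sigma\alpha$ is linear and the diffusion $\sigma^2(\alpha^2+\beta)$ is affine in $(x-w)^2$, so the separable ansatz $V^{\boldsymbol{\pi}_0}(t,x;w)=A(t)(x-w)^2+C(t)$ (the linear term vanishes since the terminal datum has none) reduces the PDE to a decoupled pair of ODEs, with the $(x-w)^2$-coefficient solving $A'(t)+(\sigma^2a^2+2\rho\sigma a)A(t)=0$, $A(T)=1$, hence $A(t)=e^{(\sigma^2a^2+2\rho\sigma a)(T-t)}>0$. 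In particular $V^{\boldsymbol{\pi}_0}\in C^{1,2}$ with $V^{\boldsymbol{\pi}_0}_{xx}=2A(t)>0$, so Theorem \ref{PIT} applies to $\boldsymbol{\pi}_0$.

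The decisive computation is the improvement step. Because every $V^{\boldsymbol{\pi}_n}$ has the form $A_n(t)(x-w)^2+C_n(t)$, one has $V^{\boldsymbol{\pi}_n}_x/V^{\boldsymbol{\pi}_n}_{xx}=(x-w)$ \emph{identically}, so (\ref{new_policy}) forces the mean of the improved policy to equal $-\tfrac{\rho}{\sigma}(x-w)$ --- the optimal mean --- after a single iteration, irrespective of $a$, $c_1$, $c_2$. Re-running the evaluation step with this optimal mean then gives the coefficient equation $A_1'(t)-\rho^2A_1(t)=0$, $A_1(T)=1$, i.e.\ $A_1(t)=e^{-\rho^2(T-t)}$, which is precisely the coefficient appearing in the optimal value function (\ref{value_verified}). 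A second application of (\ref{new_policy}) then produces variance $\lambda/(\sigma^2\cdot 2A_1(t))=\tfrac{\lambda}{2\sigma^2}e^{\rho^2(T-t)}$, so $\boldsymbol{\pi}_2=\boldsymbol{\pi}^*$; solving the residual ODE for $C_2(t)$ gives $V^{\boldsymbol{\pi}_2}=V$. Thus both sequences stabilize by $n=2$ (and in the degenerate case $a=-\rho/\sigma$ already by $n=1$), and the claimed weak convergence and value convergence follow at once.

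The routine but necessary verifications are that each iterate is admissible and that the evaluation PDE indeed characterizes $V^{\boldsymbol{\pi}_n}$; these follow exactly as in the proof of Theorem \ref{verification}, since every iterate is Gaussian with mean affine in $x$ and variance $\beta_n(t)$ continuous and positive on $[0,T]$ (hence bounded above and below by positive constants), so the feedback SDE (\ref{new_dynamics_feedback}) has linear-growth coefficients, a unique strong solution with finite second moments, and the integrability in conditions (iii)--(iv) holds. The main obstacle I anticipate is not any individual calculation but the legitimacy of the ansatz itself: confirming through a Feynman--Kac/verification argument that the explicit quadratic solution of the linear PDE truly coincides with the probabilistically defined $V^{\boldsymbol{\pi}_n}$, and that $V^{\boldsymbol{\pi}_n}_{xx}>0$ persists at each stage so that Theorem \ref{PIT} remains applicable throughout the recursion.
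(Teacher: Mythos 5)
Your proposal is correct and follows essentially the same route as the paper's own proof: compute $V^{\boldsymbol{\pi}_0}$ explicitly via the Feynman--Kac/policy-evaluation PDE, observe that the quadratic-in-$(x-w)$ structure forces the improved mean to be $-\frac{\rho}{\sigma}(x-w)$ after one iteration and the improved variance to be $\frac{\lambda}{2\sigma^2}e^{\rho^2(T-t)}$ after two, so the sequences stabilize at $\boldsymbol{\pi}^*$ and $V$ by $n=2$. The paper writes out the same two-step computation (with the explicit formula (\ref{first_value}) for $V^{\boldsymbol{\pi}_0}$), and your additional observations --- the ratio $V_x/V_{xx}=(x-w)$ being the structural reason the mean locks in immediately, and the $a=-\rho/\sigma$ case stabilizing at $n=1$ --- are accurate refinements rather than departures.
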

\begin{proof}
It can be easily verified that the feedback policy $\boldsymbol{\pi}_0$ where $\boldsymbol{\pi}_0(u;t,x,w)=\mathcal{N}(u| a(x-w), c_1e^{c_2(T-t)})$ generates an open-loop policy $\pi_0$ that is admissible with respect to the initial $(t,x)$. Moreover, it follows from the Feynman-Kac formula that the corresponding value function $V^{\boldsymbol{\pi}_0}$ satisfies the PDE
$$V^{\boldsymbol{\pi}_0}_t(t,x;w)+\int_{\mathbb{R}}\Big(\frac{1}{2}\sigma^2 u^2 V^{\boldsymbol{\pi}_0}_{xx}(t,x;w)+\rho \sigma  u V^{\boldsymbol{\pi}_0}_x(t,x;w)$$
\begin{equation}
+\lambda \ln \boldsymbol{\pi}_0(u;t,x,w)\Big)\boldsymbol{\pi}_0(u;t,x,w)du=0,
\end{equation}
with terminal condition $V^{\boldsymbol{\pi}_0}(T,x;w)=(x-w)^2-(w-z)^2$.
Solving this equation we obtain
$$V^{\boldsymbol{\pi}_0}(t,x;w)=(x-w)^2e^{(2\rho\sigma a+\sigma^2 a^2)(T-t)}+\int_t^Tc_1\sigma^2e^{(2\rho\sigma a+\sigma^2 a^2+c_2)(T-s)}ds$$
\begin{equation}\label{first_value}
+\frac{\lambda c_2}{4}(T-t)^2+\frac{\lambda\ln(2\pi e c_1)}{2}(T-t)-(w-z)^2.
\end{equation}
It is easy to check  that $V^{\boldsymbol{\pi}_0}$ satisfies the conditions in Theorem \ref{PIT} and, hence, the theorem applies. The improved policy is given by (\ref{new_policy}), which, in the current case, becomes
$$\boldsymbol{\pi}_1(u;t,x,w)=\mathcal{N}\left( \, u\, \Big| -\frac{\rho}{\sigma}(x-w)\ , \  \frac{\lambda}{2\sigma^2e^{(2\rho \sigma a+\sigma^2 a^2)(T-t)}} \right).$$
Again, we can calculate the corresponding value function as $V^{\boldsymbol{\pi}_1}(t,x;w)=(x-w)^2e^{-\rho^2(T-t)}+F_1(t)$, where $F_1$ is a function of $t$ only. Theorem \ref{PIT} is applicable again, which yields the improved policy $\boldsymbol{\pi}_2$ as exactly the optimal Gaussian policy $\boldsymbol{\pi^*}$ given in (\ref{Gaussian_verified}), together with the optimal value function $V$ in (\ref{value_verified}). The desired convergence therefore follows, as for $n\geq 2$, both the policy and the value function will no longer strictly improve  under the policy improvement scheme (\ref{new_policy}).
\end{proof}

\medskip

The above convergence result shows that if we choose the initial policy wisely, the learning scheme will, {\it theoretically},  converge after a finite number of (two, in fact)  iterations. When implementing this scheme in practice, of course, the value function for each policy can only be approximated and, hence, the learning process typically takes more iterations to converge. Nevertheless, Theorem \ref{PIT} provides the theoretical foundation for  updating a  current policy, while Theorem \ref{convergence_learning} suggests a good starting point in the policy space. We will make use of both results in the next subsection to design an implementable RL algorithm for the exploratory MV problem.

\subsection{The EMV algorithm}

In this section, we present an RL algorithm, the EMV (exploratory mean--variance) algorithm, to solve (\ref{value_function}). It consists of three concurrently ongoing procedures: the policy evaluation, the policy improvement and a self-correcting scheme for learning the  Lagrange multiplier $w$ based on stochastic approximation.

For the policy evaluation, we follow the method employed in \cite{Doya} for learning the value function $V^{\boldsymbol{\pi}}$ under any  arbitrarily given  admissible feedback policy $\boldsymbol{\pi}$. By Bellman's consistency, we have
\begin{equation}\label{learning_value}
V^{\boldsymbol{\pi}}(t,x)=\mathbb{E}\left[V^{\boldsymbol{\pi}}(s, X_s)+\lambda\int_t^s\int_{\mathbb{R}}\pi_v(u)\ln \pi_v(u)dudv\big| X_t=x\right], \;s\in[t,T],
\end{equation}
for $(t,x)\in [0,T]\times \mathbb{R}$. Rearranging this equation and dividing both sides by $s-t$, we obtain
$$\mathbb{E}\left[\frac{V^{\boldsymbol{\pi}}(s,X_s)-V^{\boldsymbol{\pi}}(t,X_t)}{s-t}+\frac{\lambda}{s-t}\int_t^s\int_{\mathbb{R}}\pi_v(u)\ln \pi_v(u)dudv\big| X_t=x\right]=0.$$
Taking $s\rightarrow t$ gives rise to the continuous-time Bellman's error (or the {\it temproral difference} (TD) error; see \cite{Doya})
\begin{equation}
\delta_t:=\dot{V}^{\boldsymbol{\pi}}_t+\lambda\int_{\mathbb{R}} \pi_t(u)\ln \pi_t(u)du,
\end{equation}
where $\dot{V}^{\boldsymbol{\pi}}_t=\frac{V^{\boldsymbol{\pi}}(t+\Delta t,X_{t+\Delta t})-V^{\boldsymbol{\pi}}(t,X_t)}{\Delta t}$ is the total derivative and $\Delta t$ is the discretization step for the learning algorithm.

The objective of the policy evaluation procedure is to minimize the Bellman's error $\delta_t$. In general, this can be carried out as follows. Denote by $V^{\theta}$ and $\boldsymbol{\pi}^{\phi}$ respectively the parametrized value function and policy (upon using regressions or neural networks, or making use of some of
the structures of the problem; see below), with $\theta,\phi$ being the vector of weights to be learned. We then minimize
$$C(\theta,\phi)=\frac{1}{2}\mathbb{E}\left[\int_0^T |\delta_t|^2dt\right]=\frac{1}{2}\mathbb{E}\left[\int_0^T \big|\dot{V}^{\theta}_t+\lambda\int_{\mathbb{R}} \pi^{\phi}_t(u)\ln \pi^{\phi}_t(u)du\big|^2dt\right],$$
where ${\pi}^{\phi}=\{{\pi}^{\phi}_t, t\in [0,T]\}$ is  generated from $\boldsymbol{\pi}^{\phi}$ with respect to a given initial state $X_0=x_0$ at time 0.
To approximate $C(\theta,\phi)$ in an implementable algorithm, we first
discretize $[0,T]$ into small equal-length intervals $[t_i,t_{i+1}]$, $i=0,1,\cdots,l$, where $t_0=0$ and $t_{l+1}=T$.
Then we collect a set of samples $\mathcal{D}=\{(t_i,x_i), i=0,1,\cdots,l+1\}$ in the following way.
The initial sample is $(0,x_0)$ for $i=0$. Now, at each $t_i$, $i=0, 1,\cdots,l$, we
sample $\pi_{t_i}^{\phi}$ to obtain an allocation $u_i\in \mathbb{R}$ in the risky asset, and then observe the wealth $x_{i+1}$ at the next time instant $t_{i+1}$.
We can now approximate $C(\theta,\phi)$ by
\begin{equation}\label{learning_objective}
C(\theta,\phi)=\frac{1}{2}\sum_{(t_i,x_i)\in \mathcal{D}}\left(\dot{V}^{\theta}(t_i,x_i)+\lambda\int_{\mathbb{R}} \pi^{\phi}_{t_i}(u)\ln \pi^{\phi}_{t_i}(u)du\right)^2 \Delta t.
\end{equation}

Instead of following  the common practice to represent $V^{\theta}$, $\pi^{\phi}$ using (deep) neural networks for continuous RL problems, in this paper we will take advantage of the more explicit parametric expressions obtained  in Theorem \ref{verification} and Theorem \ref{convergence_learning}. This will lead to faster learning and convergence, which will be demonstrated in all the numerical experiments below. More precisely, by virtue of  Theorem \ref{convergence_learning}, we will focus on Gaussian policies with variance taking the form $c_1e^{c_2(T-t)}$, which in turn leads to the entropy parametrized by $\mathcal{H}(\pi^{\phi}_t)=\phi_1+\phi_2(T-t)$, where $\phi=(\phi_1,\phi_2)'$, with $\phi_1\in \mathbb{R}$ and $\phi_2>0$, is the parameter vector to be learned.

On the other hand, as suggested by the theoretical  optimal value function (\ref{value_verified}) in Theorem \ref{verification}, we consider the parameterized  $V^{\theta}$, where $\theta=(\theta_0,\theta_1,\theta_2,\theta_3)'$, by
\begin{equation}\label{V_parametrized}
 V^{\theta}(t,x)=(x-w)^2e^{-\theta_3(T-t)}+\theta_2 t^2+\theta_1 t+\theta_0,\;\;(t,x)\in [0,T]\times \mathbb{R}.
 \end{equation}
 From the policy improvement updating scheme (\ref{new_policy}), it follows that the variance of the policy $\pi^{\phi}_t$ is $\frac{\lambda}{2\sigma^2}e^{\theta_3(T-t)}$, resulting in  the entropy $\frac{1}{2}\ln \frac{\pi e \lambda}{\sigma^2}+\frac{\theta_3}{2}(T-t)$.
 Equating this with the previously derived form  $\mathcal{H}(\pi^{\phi}_t)=\phi_1+\phi_2(T-t)$, we deduce
\begin{equation}
\sigma^2=\lambda\pi e^{1-2\phi_1}\quad \text{and}\quad \theta_3=2\phi_2=\rho^2.
\end{equation}
The improved policy, in turn, becomes, according to (\ref{new_policy}), that
$$\boldsymbol{\pi}(u;t,x,w)=\mathcal{N}\left( \, u\, \Big| -\frac{\rho}{\sigma}(x-w)\ ,\ \frac{\lambda}{2\sigma^2}e^{\theta_3(T-t)}\right)$$
\begin{equation}\label{policy_updating}
=\mathcal{N}\left( \, u\, \Big|-\sqrt{\frac{2\phi_2}{\lambda\pi}}e^{\frac{2\phi_1-1}{2}}(x-w)\ ,\ \frac{1}{2\pi}e^{2\phi_2(T-t)+2\phi_1-1}\right),
\end{equation}
where we have assumed that the true (unknown) Sharpe ratio $\rho>0$.

Rewriting the objective (\ref{learning_objective}) using $\mathcal{H}(\pi^{\phi}_t)=\phi_1+\phi_2(T-t)$, we obtain
$$C(\theta,\phi)=\frac{1}{2}\sum_{(t_i,x_i)\in \mathcal{D}}\left(\dot{V}^{\theta}(t_i,x_i)-\lambda (\phi_1+\phi_2(T-t))\right)^2\Delta t.$$
Note that $\dot{V}^{\theta}(t_i,x_i)=\frac{{V}^{\theta}(t_{i+1},x_{i+1})-{V}^{\theta}(t_i,x_i)}{\Delta t}$, with $\theta_3=2\phi_2$ in the parametrization of ${V}^{\theta}(t_i,x_i)$. It is now straightforward to devise the updating rules for $(\theta_1,\theta_2)'$ and $(\phi_1,\phi_2)'$ using stochastic gradient descent algorithms (see, for example, Chapter $8$ of \cite{SGD}). Precisely, we compute
\begin{equation}\label{theta_1}
\frac{\partial C}{\partial \theta_1}=\sum_{(t_i,x_i)\in \mathcal{D}}\left(\dot{V}^{\theta}(t_i,x_i)-\lambda (\phi_1+\phi_2(T-t_i))\right)\Delta t;
\end{equation}
\begin{equation}\label{theta_2}
\frac{\partial C}{\partial \theta_2}=\sum_{(t_i,x_i)\in \mathcal{D}}\left(\dot{V}^{\theta}(t_i,x_i)-\lambda (\phi_1+\phi_2(T-t_i))\right)(t_{i+1}^2-t^2_i);
\end{equation}
\begin{equation}\label{phi_1}
\frac{\partial C}{\partial \phi_1}=-\lambda\sum_{(t_i,x_i)\in \mathcal{D}}\left(\dot{V}^{\theta}(t_i,x_i)-\lambda (\phi_1+\phi_2(T-t_i))\right)\Delta t;
\end{equation}
$$\frac{\partial C}{\partial \phi_2}=\sum_{(t_i,x_i)\in \mathcal{D}}\left(\dot{V}^{\theta}(t_i,x_i)-\lambda (\phi_1+\phi_2(T-t_i))\right)\Delta t$$
\begin{equation}\label{phi_2}
\times\left(-\frac{2(x_{i+1}-w)^2e^{-2\phi_2(T-t_{i+1})}(T-t_{i+1})-2(x_{i}-w)^2e^{-2\phi_2(T-t_{i})}(T-t_{i})}{\Delta t}-\lambda (T-t_i)\right).
\end{equation}
Moreover, the parameter $\theta_3$ is updated with $\theta_3=2\phi_2$, and
$\theta_0$ is updated based on the terminal condition $V^{\theta}(T,x;w)=(x-w)^2-(w-z)^2$, which yields
\begin{equation}\label{theta_0}
\theta_0=-\theta_2T^2-\theta_1T-(w-z)^2.
\end{equation}

Finally, we provide a scheme for learning the underlying Lagrange multiplier $w$. Indeed, the constraint $\mathbb{E}[X_T]=z$ itself suggests the standard stochastic approximation update
\begin{equation}\label{w_update}
w_{n+1}=w_n-\alpha_n(X_T-z),
\end{equation}
with $\alpha_n>0$, $n\geq 1$, being the learning rate. In implementation, we can replace $X_T$ in (\ref{w_update}) by a sample average $\frac{1}{N}\sum_j x^j_T$ to have a more stable learning process (see, for example, Section 1.1 of \cite{kushner}), where $N\geq 1$ is the sample size and  $x^j_T$'s are the most recent $N$ terminal wealth values obtained at the time when $w$ is to be updated. It is interesting to notice that the learning scheme of $w$, (\ref{w_update}), is statistically  self-correcting. For example, if the (sample average) terminal wealth is above the target $z$, the updating rule (\ref{w_update}) will decrease $w$, which in turn decreases the mean of the exploratory Gaussian policy $\boldsymbol{\pi}$ in view of (\ref{policy_updating}). This implies that there will be less risky allocation in the next step of actions for learning and optimizing, leading to, on average, a decreased terminal wealth.\footnote{This discussion here is based on the assumption that the market has a positive Sharpe ratio (recall (\ref{policy_updating})). The case of a negative Sharpe ratio can be dealt with similarly.}

\medskip

We now summarize the pseudocode for the EMV algorithm.

\begin{algorithm}
\floatname{algorithm}{Algorithm $1$}
\renewcommand{\thealgorithm}{}
\caption{\textit{EMV: Exploratory Mean-Variance Portfolio Selection}}
\label{protocol1}
\begin{algorithmic}
\STATE  \textbf{Input}: Market Simulator \textit{Market}, learning rates $\alpha,\eta_{\theta},\eta_{\phi}$, initial wealth $x_0$, target payoff $z$, investment horizon $T$, discretization $\Delta t$, exploration rate  $\lambda$, number of iterations $M$, sample average size $N$.
\STATE  Initialize $\theta$, $\phi$ and $w$
\STATE  \textbf{for} $k=1$ \textbf{to} $M$ \textbf{do}
\STATE  \quad \textbf{for} $i=1$ \textbf{to} $\floor{\frac{T}{\Delta t}}$ \textbf{do}
\STATE \quad \quad Sample $(t^k_i,x^k_i)$ from \textit{Market} under $\pi^{\phi}$
\STATE \quad \quad Obtain collected samples $\mathcal{D}=\{(t^k_i,x^k_i), 1\leq i\leq  \floor{\frac{T}{\Delta t}}\}$
\STATE \quad \quad Update $\theta\leftarrow\theta-\eta_{\theta}\nabla_{\theta} C(\theta,\phi)$ using (\ref{theta_1}) and (\ref{theta_2})
\STATE \quad \quad Update $\theta_0$ using (\ref{theta_0}) and $\theta_3\leftarrow 2\phi_2$
\STATE \quad \quad Update $\phi\leftarrow\phi-\eta_{\phi}\nabla_{\phi} C(\theta,\phi)$ using (\ref{phi_1}) and (\ref{phi_2})
\STATE \quad \textbf{end for}
\STATE \quad Update $\pi^{\phi}\leftarrow\mathcal{N}\left( \, u\, \Big|-\sqrt{\frac{2\phi_2}{\lambda\pi}}e^{\frac{2\phi_1-1}{2}}(x-w)\ ,\ \frac{1}{2\pi}e^{2\phi_2(T-t)+2\phi_1-1}\right)$
\STATE \quad \textbf{if} $k$ mod $N==0$
\STATE \quad \quad Update $w\leftarrow w-\alpha\left(\frac{1}{N}\sum_{j=k-N+1}^k x^j_{\floor{\frac{T}{\Delta t}}}-z\right)$
\STATE \quad \textbf{end if}
\STATE \textbf{end for}
\end{algorithmic}
\end{algorithm}

\section{Simulation Results}
In this section, we compare the performance of our RL algorithm EMV with two other methods that could be used to solve the classical MV problem (\ref{target}). The first one is the traditional maximum likelihood estimation (MLE) that relies on the real-time estimation of the drift $\mu$ and the volatility $\sigma$ in the geometric Brownian motion price model (\ref{price}). Once the estimators of $\mu$ and $\sigma$ are available using the most recent price time series, the portfolio  allocation can be computed using the optimal allocation (\ref{classical_optimal_control}) for the classical MV problem. Another alternative  is based on the deep deterministic policy gradient (DDPG) method developed by \cite{Lillicrap}, a method that has been taken as the baseline approach for comparisons of different RL algorithms that solve continuous control problems.

We conduct comparisons in various simulation settings, including the stationary setting where the price of the risky asset evolves according to the geometric Brownian motion (\ref{price}) and the non-stationary setting that involves stochastic factor modeling for the drift and volatility parameters. We also consider a time-decreasing exploration case which can be easily accommodated by the EMV through an annealing $\lambda$ across all the episodes $[0,T]$ in the learning process. In nearly all of the experiments, our EMV algorithm outperforms the other two methods by large margins.

\medskip

In the following, we briefly describe the two other methods.

\smallskip

\noindent \textbf{\textit{Maximum likelihood estimation (MLE)}}\\
\noindent The MLE is a popular method for estimating the parameters $\mu$ and $\sigma$ in the geometric Brownian motion  model (\ref{price}). We refer the interested readers to Section 9.3.2 of \cite{MLE} for a detailed description of this method. At each decision making time $t_i$, the MLE estimators for $\mu$ and $\sigma$ are calculated based on the most recent $100$ data points of the price. One can then substitute the MLE estimators into the optimal allocation (\ref{classical_optimal_control}) and the expression of the Lagrange multiplier $w=\frac{ze^{\rho^2T}-x_0}{e^{\rho^2T}-1}$ to compute the allocation $u_i\in \mathbb{R}$ in the risky asset. Such a two-phase procedure is commonly used in adaptive control, where the first phase is  identification   and the second one is optimization (see, for example, \cite{CG}, \cite{Kumar}). The real-time estimation procedure also allows the MLE to be applicable in non-stationary markets with time-varying $\mu$ and $\sigma$.

\medskip

\noindent \textbf{\textit{Deep deterministic policy gradient (DDPG)}}\\
\noindent The DDPG method has attracted significant attention since it was introduced  in \cite{Lillicrap}. It has been taken as a
state-of-the-art baseline approach for continuous control (action) RL problems, albeit  in discrete time. The DDPG learns a deterministic target policy using deep neural networks for both the critic and the actor, with exogenous noise being added to encourage exploration (e.g. using OU processes; see \cite{Lillicrap} for details). To adapt the DDPG to the classical MV setting (without entropy regularization), we make the following adjustments. Since the target policy we aim to learn is a deterministic function of $(x-w)$
(see (\ref{classical_optimal_control})), we will  feed the samples $x_{i}-w$, rather than only $x_{i}$, to the actor network in order to output the allocation $u_i\in \mathbb{R}$. Here, $w$ is the learned Lagrange multiplier at the decision making time $t_i$, obtained from the same self-correcting scheme (\ref{w_update}). This modification also enables us to connect current allocation with the previously obtained sample average of terminal wealth in a feedback loop, through the Lagrange multiplier $w$. Another modification  from the original DDPG is that we include prioritized experience replay (\cite{prioritized}), rather than sampling experience uniformly from the replay buffer. We select the terminal experience with higher probability to train the critic and actor networks, to account for the fact that the  MV problem has no running cost, but only a terminal cost given by $(x_T-w)^2-(w-z)$ (cf. (\ref{unconstrained_classical})). Such a modification significantly improves learning speed and performance.

\subsection{The stationary market case}

We first perform numerical simulations in a stationary market environment, where the price process is simulated according to the geometric Brownian motion (\ref{price}) with constant $\mu$ and $\sigma$. We take $T=1$ and $\Delta t=\frac{1}{252}$, indicating that the MV problem is considered over a one-year period, with daily rebalancing. Reasonable values of the annualized return and volatility will be taken from the sets $\mu\in \{-50\%, -30\%,-10\%, 0\%,10\%, 30\%, 50\%\}$ and $\sigma \in \{10\%, 20\%,30\%, 40\%\}$, respectively. These values are usually considered for a ``typical" stock for simulation purpose (see, for example, \cite{Lo}). The annualized interest rate is taken to be $r=2\%$. We consider the  MV problem with a $40\%$ annualized target return on the terminal wealth starting from a normalized initial wealth $x_0=1$ and, hence, $z=1.4$ in (\ref{target}). These model parameters will be fixed for all the simulations considered in this paper.

For the EMV algorithm, we take the total training episodes $M=20000$ and the sample  size $N=10$ for learning the Lagrange multiplier $w$. The temperature parameter $\lambda=2$. Across all the simulations in this section, the learning rates are fixed as $\alpha=0.05$ and $\eta_{\theta}=\eta_{\phi}=0.0005$.

We choose the same $M$ and $N$ for the DDPG algorithm for a fair comparison. The critic network has $3$ hidden layers with $10, 8, 8$ hidden units for each layer, and the actor network has $2$ hidden layers with $10, 8$ hidden units. The learning rates for the critic and actor are $0.0001$, fixed across all simulations in this section as the case for EMV. The replay buffer has size $80$, and the minibatch size is $20$ for the stochastic gradient decent. The target network has the soft update parameter $\tau=0.001$. Finally, we adopt the OU process for adding exploration noise; see \cite{Lillicrap} for more details. All the simulations using the DDPG algorithm were trained in Tensorflow.

We summarize in Table \ref{table_1} the simulation results for the three methods, EMV, MLE and DDPG, under market scenarios corresponding to different combinations of $\mu$'s and $\sigma$'s.   For each method under each market scenario, we present the annualized sample mean $\overline{M}$ and sample variance $\overline{V}$ of the last $2000$ terminal wealth,  and the corresponding annualized Sharpe ratio $SR=\frac{\overline{M}-1}{\sqrt{\overline{V}}}$.

\afterpage{
\setlength\LTleft{-0.6in}
\setlength\LTright{-1in}
\begin{longtable}{|l|l|l|l|}
\caption[Caption for LOF]{Comparison of the annualized sample mean ($\overline{M}$), sample variance ($\overline{V}$), Sharpe ratio ($SR$) and average training time (per experiment) for EMV, MLE and DDPG.\footnotemark}\label{table_1} \\
\hline
\textbf{Market scenarios} & \textbf{\ \ \ \ \ \ \ \ \  EMV} & \textbf{\ \ \ \ \ \ \ \ \ MLE} & \textbf{\ \ \ \ \ \ \ \ DDPG} \\
\hline
\endfirsthead
\multicolumn{4}{l}%
{\tablename\ \thetable\ -- \textit{Continued from previous page}} \\
\hline
\textbf{Market scenarios} & \textbf{\ \ \ \ \ \  EMV} & \textbf{\ \ \ \ \ \ \ \ \ MLE} & \textbf{\ \ \ \ \ \ \ \ DDPG} \\
\hline
\endhead
\hline \multicolumn{4}{r}{\textit{Continued on next page}} \\
\endfoot
\hline
\endlastfoot
$\mu=-50\%, \sigma=10\%$ & 1.396; \ \ 0.006; \ \ 5.107 & 1.556; \ \ 0.017; \ \ 4.284 & 1.297; \ \ 0.107; \ \ 0.908 \\
$\mu=-30\%,	\sigma=10\%$ & 1.390; \ \ 0.016; \ \ 3.039 & 1.215; \ \ 0.014;  \ \ 1.833 & 1.401; \ \ 0.003; \ \ 7.076 \\
$\mu=-10\%,	\sigma=10\%$ & 1.330; \ \ 0.074; \ \ 1.218 & 1.056; \ \ 1.365; \ \ 0.482 & 0.901; \ \ 0.014; \ \ -0.833 \\
$\mu=0\%,	\quad \ \sigma=10\%$ &  1.204; \ \ 1.280; \ \ 0.180 & 0.926; \ \ 38.40; \ \, -0.012  & 1.029; \ \ 0.038; \ \ 0.147 \\
$\mu=10\%,	\ \ \, \sigma=10\%$ & 1.318; \ \ 0.171; \ \ 0.769 & 1.009; \ \ 0.453; \ \  0.014 & 0.951; \ \  0.008; \ \ -0.541 \\
$\mu=30\%,	\ \ \, \sigma=10\%$ & 1.385; \ \ 0.019; \ \ 2.785 & 1.179; \ \ 0.059; \ \  0.737 & 0.224; \ \  0.104; \ \ -2.405 \\
$\mu=50\%,	\ \ \, \sigma=10\%$ & 1.394; \ \ 0.007; \ \ 4.772 & 1.459; \ \ 0.013; \ \ 3.983 & 1.478; \ \ 0.667; \ \ 0.717 \\
$\mu=-50\%,	\sigma=20\%$ & 1.387; \ \ 0.022; \ \ 2.606 & 1.310; \ \ 0.050; \ \ 1.387 & -0.551; \ 0.425; \ \ -2.379 \\
$\mu=-30\%,	\sigma=20\%$ & 1.359; \ \ 0.051; \ \  1.598 & 1.205; \ \ 1.319; \ \ 0.178 & 1.973; \ \ 0.404; \ \ 1.531 \\
$\mu=-10\%,	\sigma=20\%$ & 1.309; \ \ 0.245; \ \  0.625 & 1.036; \ \ 2.183; \ \ 0.024 & 1.349; \ \ 0.368; \ \ 0.575 \\
$\mu=0\%,	\quad \ \sigma=20\%$ & 1.105; \ \ 0.727; \ \ 0.123 & 0.921; \ \ 6.887; \ \ -0.301  & 0.988; \ \ 0.139; \ \ -0.033 \\
$\mu=10\%,	\ \ \, \sigma=20\%$ & 1.221; \ \ 0.314; \ \ 0.395 & 1.045; \ \ 6.751; \ \ 0.017 & 1.243; \ \ 0.354; \ \ 0.408 \\
$\mu=30\%,	\ \ \, \sigma=20\%$ & 1.345; \ \ 0.062; \ \ 1.387 & 1.155; \ \ 1.743; \ \ 0.117 & 1.360; \ \ 0.050; \ \ 1.613 \\
$\mu=50\%,	\ \ \, \sigma=20\%$ & 1.385; \ \ 0.027; \ \ 2.350 & 1.237; \ \ 1.293; \ \ 0.208 & 1.385; \ \ 0.004; \ \ 6.496 \\
$\mu=-50\%,	\sigma=30\%$ & 1.353; \ \ 0.044; \ \  1.682 & 1.333; \ \ 9.465; \ \ 0.108  & 0.272; \ \ 2.762; \ \ -0.438 \\
$\mu=-30\%,	\sigma=30\%$ & 1.323; \ \ 0.106; \ \  0.992 & 1.092; \ \ 5.657; \ \  0.039 & 0.034; \ \ 0.924; \ \ -1.005 \\
$\mu=-10\%,	\sigma=30\%$ & 1.317; \ \ 0.696; \ \  0.380 & 1.045; \ \ 17.87; \ \  0.011 & 1.371; \ \ 0.792; \ \ 0.417 \\
$\mu=0\%,	\quad \ \sigma=30\%$ & 1.079; \ \ 0.727; \ \ 0.092 & 0.955; \ \ 28.84; \ \ -0.008 & 1.070; \ \ 0.752; \ \ 0.081 \\
$\mu=10\%,	\ \ \, \sigma=30\%$ & 1.282; \ \ 0.885; \ \ 0.300 & 0.885; \ \ 24.06; \ \ -0.023 & 1.243; \ \ 0.825; \ \ 0.268 \\
$\mu=30\%,	\ \ \, \sigma=30\%$ & 1.334; \ \ 0.131; \ \ 0.921 & 0.886; \ \ 24.41; \ \ -0.023 & 1.210; \ \ 0.921; \ \ 0.218 \\
$\mu=50\%,	\ \ \, \sigma=30\%$ & 1.350; \ \ 0.049; \ \ 1.583 & 1.238; \ \  7.505; \ \ 0.087 & 0.610; \ \ 0.143 \ \ \ -1.030  \\
$\mu=-50\%,	\sigma=40\%$ & 1.342; \ \ 0.061; \ \ 1.385 & 1.284; \ \  11.14; \ \  0.085 & 1.328; \ \ 0.501; \ \ 0.463 \\
$\mu=-30\%,	\sigma=40\%$ & 1.320; \ \ 0.146; \ \ 0.839 & 1.145; \ \ 3.315; \ \  0.080 & 1.212; \ \ 0.160; \ \  0.531\\
$\mu=-10\%,	\sigma=40\%$ & 1.241; \ \ 0.707; \ \ 0.287 & 0.979; \ \ 7.960; \ \  -0.007 & 1.335; \ \ 1.413; \ \  0.282 \\
$\mu=0\%,	\quad \ \sigma=40\%$ & 1.057; \ \ 0.671; \ \  0.070 & 0.950; \ \ 31.60; \ \ -0.009 & 1.064; \ \ 1.467; \ \ 0.053  \\
$\mu=10\%,	\ \ \, \sigma=40\%$ & 1.155; \ \ 0.591; \ \ 0.202 & 1.053; \ \ 9.090;  \ \  0.017 & 1.242; \ \ 1.499; \ \ 0.198 \\
$\mu=30\%,	\ \ \, \sigma=40\%$ & 1.320; \ \ 0.198; \ \ 0.716 & 1.083; \ \ 17.46;  \ \  0.020 & 0.179; \ \ 1.533; \ \ -0.663 \\
$\mu=50\%,	\ \ \, \sigma=40\%$ & 1.329; \ \ 0.078; \ \ 1.174 & 0.963; \ \ 43.17; \ \ -0.006  & -0.390; \ 1.577;  \ \ -1.107 \\
\hline
\textbf{\ \ \ Training time} & $\qquad \quad \ <10$s & $\qquad \quad \ \ <10$s & $\qquad \quad \ \ \approx 3$hrs \\
\end{longtable}
\footnotetext{If the hyperparameters are allowed to be tuned on a scenario-by-scenario basis (i.e., not for comparison purpose), the performance of EMV can be further improved with the Shape ratio close to its theoretical maximum, whereas improvement is more difficult to achieve for DDPG by hyperparameter tuning.}
\clearpage
}
A few observations are in order. First of all, the EMV algorithm outperforms the other two methods by a large margin in  several statistics of the investment outcome including sample mean, sample variance and Sharpe ratio. In fact, based on the comparison of Sharpe ratios, EMV outperforms  MLE in all the $28$ experiments, and outperforms DDPG  in $23$ out of the total $28$ experiments. Notice that  DDPG  yields rather unstable performance across different market scenarios, with some of the sample average terminal wealth below $0$ indicating the occurrence of bankruptcy. The EMV algorithm, on the other hand, achieves positive annualized return in all the experiments. The advantage  of the EMV algorithm over the deep learning DDPG algorithm is even more
 significant if we take into account the training time (all the experiments were performed on a MacBook Air laptop). Indeed,  DDPG involves extensive training of two deep neural networks, making it less appealing for high-frequency portfolio rebalancing and trading in practice.

Another advantage of EMV over DDPG is the ease of hyperparameter tuning. Recall that the learning rates are fixed respectively for EMV and DDPG across all the experiments for different $\mu$'s and $\sigma$'s. The performance of the EMV algorithm is less affected by the fixed learning rates, while the DDPG algorithm is not. Indeed, DDPG has been noted for its notorious brittleness and hyperparameter sensitivity in practice (see, for example, \cite{Deep_RL_1}, \cite{Deep_RL_2}), issues also shared by other deep RL methods for continuous control problems. Our EMV algorithm does not suffer from such issues, as a result of avoiding deep neural networks in its training and decision making processes.

The MLE method, although free from any learning rates tuning, needs to estimate the underlying parameters $\mu$ and $\sigma$. In all the simulations presented in Table \ref{table_1}, the estimated value of $\sigma$ is relatively close to its true value, while the drift parameter $\mu$ cannot be estimated accurately. This is consistent with the well documented mean--blur problem, which in turn leads to higher variance of the terminal wealth (see Table \ref{table_1}) when one applies the estimated $\mu$ and $\sigma$ to select the risky allocation (\ref{classical_optimal_control}).

Finally, we present the learning curves for the three methods in Figure \ref{fig_mean} and Figure \ref{fig_variance}. We plot the changes of the sample mean and sample variance of every (non-overlapping) $50$ terminal wealth, as the learning proceeds for each method.\footnote{Note the log scale for the variance in Figure \ref{fig_variance}.} From these plots, we can see that the EMV algorithm converges relatively faster than the other two methods, achieving relatively good performance even in the early phase of the learning process. This is also consistent with the convergence result in Theorem \ref{convergence_learning} and the remarks immediately following it in Section 4.1.

\begin{figure}[H]
\begin{center}
   \includegraphics[scale=0.25]{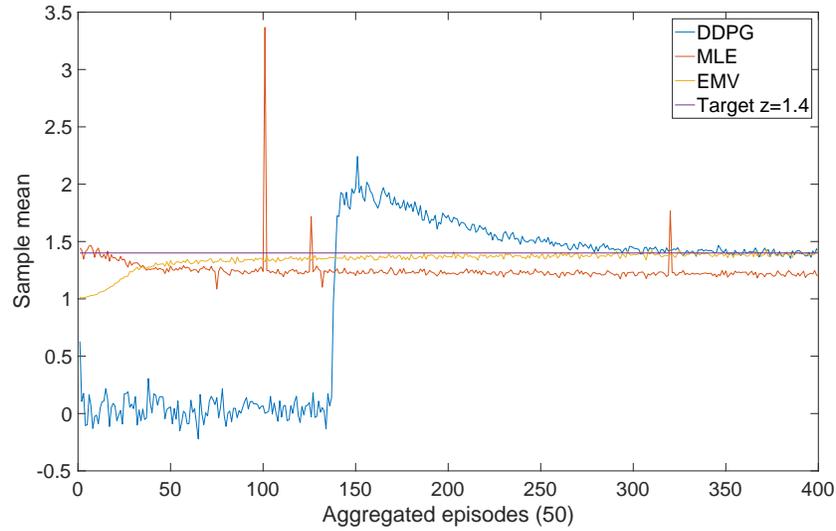}
   \caption{Learning curves of sample means of terminal wealth (over every $50$ iterations) for EMV, MLE and DDPG ($\mu=-30\%,	\sigma=10\%$).}\label{fig_mean}
\end{center}
\end{figure}

\begin{figure}[H]
\begin{center}
   \includegraphics[scale=0.25]{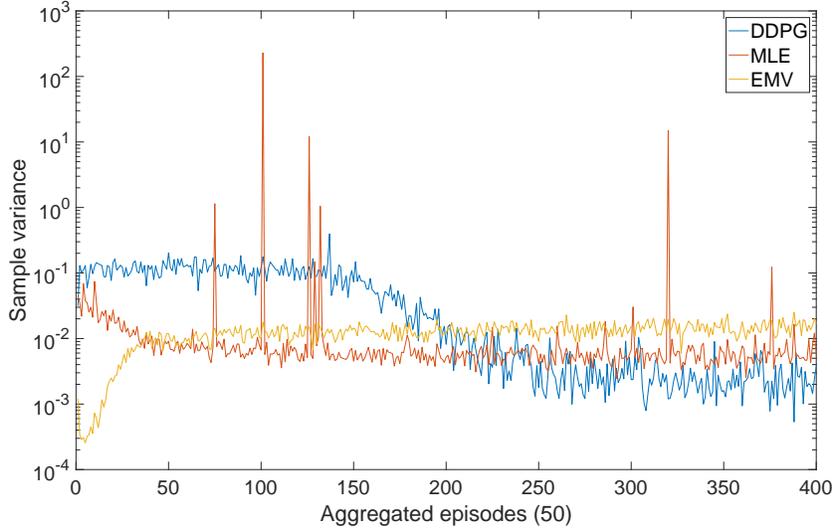}
   \caption{Learning curves of sample variances of terminal wealth (over every $50$ iterations) for EMV, MLE and DDPG ($\mu=-30\%,	\sigma=10\%$).}\label{fig_variance}
\end{center}
\end{figure}
\subsection{The non-stationary market case}
When it comes to the application domains of RL, 
one of the major differences between quantitative finance and other domains (such as AlphaGo; see \cite{Go1}) is that the underlying unknown investment environment is typically time-varying with the former. In this subsection, we consider the performance of the previous three methods in the non-stationary scenario where the price process is modeled by a stochastic factor model. To have a well-defined learning problem, the stochastic factor needs to change at a much slower time-scale, compared to that of the learning process. Specifically, we take the slow factor model within a multi-scale stochastic volatility model framework (see, for example, \cite{fouque}). The price process follows
\begin{equation*}
dS_t=S_t(\mu_t\, dt+\sigma \,dW_t),\quad 0<t\leq T, \quad S_0=s_0>0,
\end{equation*}
with $\mu_t,\sigma_t$, $t\in [0,T]$, being respectively the drift and volatility processes restricted to each simulation episode $[0,T]$ (so they may vary across different episodes). The controlled wealth dynamics over each episode $[0,T]$ is therefore
\begin{equation}\label{wealth_time_varying}
dx_t^u=\sigma_t u_t(\rho_t\, dt+\, dW_t), \quad 0<t\leq T, \quad x^u_0=x_0\in \mathbb{R}.
\end{equation}
For simulations in this subsection, we take
\begin{equation}\label{stochastic_factor}
d\rho_t=\delta dt \quad \text{and}\quad d\sigma_t=\sigma_t(\delta \, dt+\sqrt{\delta}\, dW^1_t), \quad 0<t\leq MT,
\end{equation}
with $\rho_0\in \mathbb{R}$ and $\sigma_0>0$ being the initial values respectively, $\delta>0$ being a small parameter, and the two Brownian motions satisfying $d\langle W, W^1\rangle_t=\gamma dt$, with $|\gamma|<1$. Notice that the terminal horizon for the stochastic factor model (\ref{stochastic_factor}) is $MT$ (recall $M=20000$ and $T=1$ as in Subsection 5.1), indicating  that the stochastic factors $\rho_t$ and $\sigma_t$ change across all the $M$ episodes. To make sure that their values stay in reasonable ranges after running for all $M$ episodes, we take $\delta$ to be small with value $\delta=0.0001$, and $\rho_0=-3.2$, $\sigma_0=10\%$, corresponding to the case $\mu_0=-30\%$ initially. We plot the learning curves in Figure \ref{fig_mean_non} and Figure \ref{fig_variance_non}. Clearly, the EMV algorithm displays remarkable stability for learning performance, compared to the other two methods even in the non-stationary market scenario.

\begin{figure}[H]
\begin{center}
   \includegraphics[scale=0.25]{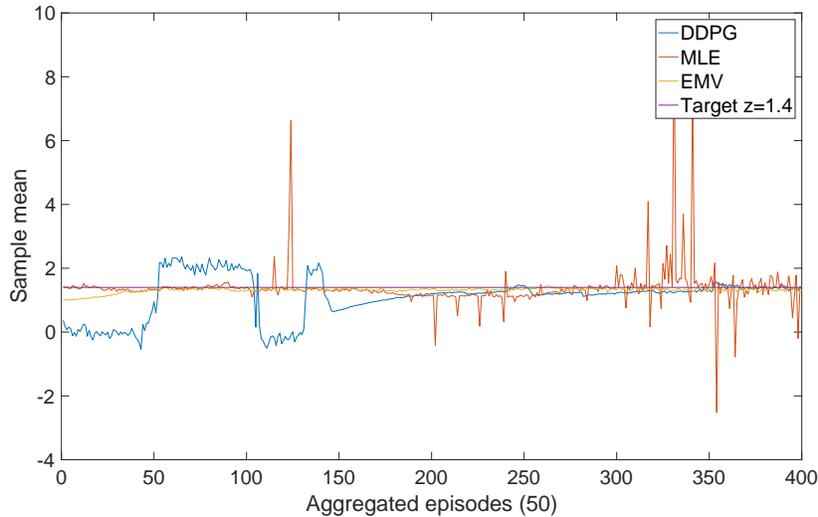}
   \caption{Learning curves of sample means of terminal wealth (over every $50$ iterations) for EMV, MLE and DDPG for non-stationary market scenario ($\mu_0=-30\%,	\sigma_0=10\%, \delta=0.0001, \gamma =0$).}\label{fig_mean_non}
\end{center}
\end{figure}

\begin{figure}[H]
\begin{center}
   \includegraphics[scale=0.25]{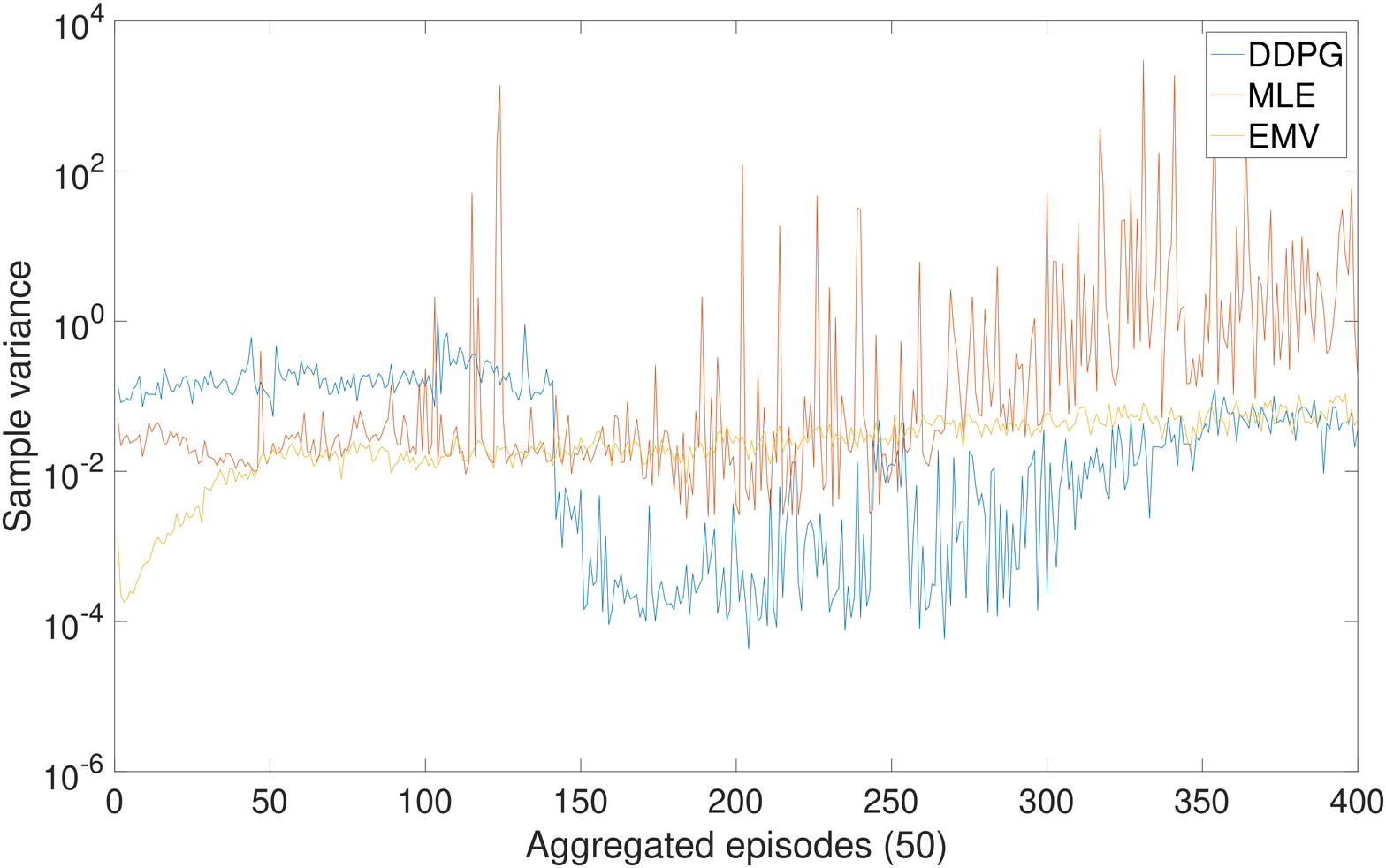}
   \caption{Learning curves of sample variances of terminal wealth (over every $50$ iterations) for EMV, MLE and DDPG for non-stationary market scenario ($\mu_0=-30\%,	\sigma_0=10\%, \delta=0.0001, \gamma =0$).}\label{fig_variance_non}
\end{center}
\end{figure}

\subsection{The decaying exploration case}
A decaying exploration scheme is often desirable for RL since exploitation ought to gradually take more weight over exploration, as more learning iterations have been carried out.\footnote{It is important to distinguish between decaying during  a given learning episode
$[0,T]$ and decaying during the entire learning process (across different episodes). The former has been derived in Theorem 1 as the decay of the Gaussian variance in $t$. The latter refers to the notion that less exploration is needed in later learning episodes.}  Within the current entropy-regularized relaxed stochastic control framework, we have been able to show the convergence from the solution of the exploratory MV to that of the classical MV in Theorem \ref{convergence_to_Dirac}, as the tradeoff parameter $\lambda\rightarrow 0$. Putting together Theorem \ref{convergence_to_Dirac} and Theorem \ref{convergence_learning}, we can reasonably expect a further improvement of the EMV algorithm when it adopts a decaying $\lambda$ scheme, rather than a constant $\lambda$  as in the previous two sections. Our numerical simulation result reported in Figure \ref{fig_hist} demonstrates slightly improved performance of the EMV algorithm with a specifically chosen $\lambda$ process that decreases across all the $M$ episodes, given by
\begin{equation}\label{lambda}
\lambda_k=\lambda_0\left(1-\exp\left(\frac{200(k-M)}{M}\right)\right), \quad \text{for}\ k=0,1,2,\dots, M.
\end{equation}
We plot the histogram of the last $2000$ values of terminal wealth generated by the EMV algorithm with the decaying $\lambda$ scheme starting from $\lambda_0=2$ and the original EMV algorithm with constant $\lambda=2$, respectively. The Sharpe ratio increases from $3.039$ to $3.243$. The result in Figure \ref{fig_hist} corresponds to the stationary market case; the case for non-stationary market demonstrates similar improvement by adopting the decaying exploration scheme.
\begin{figure}[H]
\begin{center}
   \includegraphics[scale=0.5]{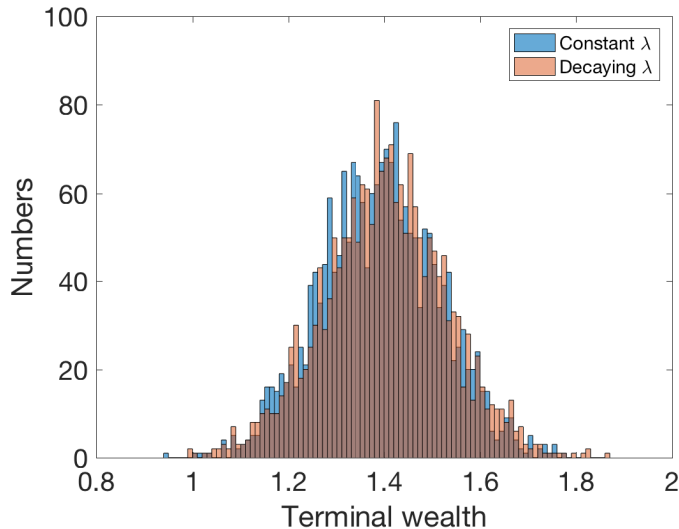}
   \caption{Histogram of the last $2000$ values of terminal wealth for EMV with constant $\lambda$ and decaying $\lambda$ ($\mu=-30\%,	\sigma=10\%$).}\label{fig_hist}
\end{center}
\end{figure}

\section{Conclusions}

In this paper we have developed an RL framework for the continuous-time MV portfolio selection, using the exploratory stochastic control formulation  recently proposed and studied in \cite{Hwang}. By recasting the MV portfolio selection  as an exploration/learning and exploitation/optimization  problem, we are able to derive a data-driven solution, completely skipping any estimation  of the unknown model parameters which is a notoriously difficult, if not insurmountable, task in investment practice. The exploration part is explicitly captured by the relaxed stochastic control formulation and the resulting exploratory state dynamics, as well as the entropy-regularized objective function of the new optimization problem. We prove that the feedback control distribution that optimally balances exploration and exploitation in the MV setting is a Gaussian distribution with a time-decaying variance. Similar to the case of  general LQ problems in \cite{Hwang}, we establish the close connections between the exploratory MV and the classical MV problems, including the solvability equivalence and the convergence as exploration decays to zero.

The RL framework  of the classical MV problem also allows us to design a competitive and interpretable RL algorithm, thanks in addition to a proved  policy improvement theorem and the explicit functional structures for both the value function and the optimal control policy. The policy improvement theorem yields an updating scheme for the control policy that improves the objective  value in each iteration.
The explicit structures of the {\it theoretical} optimal solution to the exploratory MV problem suggest simple yet efficient function approximators without having to resort to  black-box approaches such as neural network approximations. The advantage of our method has been demonstrated  by various numerical simulations with both stationary and non-stationary market environments, where our RL algorithm generally outperforms other two approaches by large margins when solving the MV problem in the continuous control setting.

It should also be noted that  the MV problem considered in this paper is {\it almost} model-free, in the sense that what is essentially needed for the underlying theory
is the LQ structure only, namely, the incremental change in wealth depends linearly  on wealth and
portfolio, and the objective function is quadratic in the terminal wealth.
The former is a reasonable assumption so long as the incremental change in the risky asset price is linear in the price itself (including but not limited to the case when the price is lognormal and the non-stationary case studied in Section 5.2), and the latter is an intrinsic property of
the MV formulation due to the variance term. Therefore, our algorithm is data-driven on one hand yet
entirely interpretable on the other (in contrast to the black-box approach).

Interesting future research directions include empirical study of our RL algorithm using real market data, in the high-dimensional control setting where allocations among multiple risky assets need to be determined at each decision time. For that, our exploratory MV formulation would remain intact  and a multivariate Gaussian distribution would be learned using an algorithm similar to Algorithm 1. It would be interesting to compare the performance of our algorithm with other existing methods, including the Fama-French model (\cite{Fama}) and the distributionally robust method for MV problem (\cite{robust_MV}). Another open question is to design an endogenous, ``optimal'' decaying scheme for the temperature parameter $\lambda$ as learning advances, an essential quantity that dictates the overall level of exploration and bridges the exploratory MV problem with the classical MV problem. These questions are left for further investigations.

\bibliography{bibtex_2}

\end{document}